\newtheorem{theorem}{Theorem}
\newtheorem{lemma}{Lemma}
\newtheorem{proposition}{Proposition}
\newtheorem{corollary}{Corollary}
\newtheorem{remark}{Remark}
\newtheorem{definition}{Definition}
\newcommand{\tr}{\text{Tr}}
\newcommand{\be}{\begin{eqnarray}}
\newcommand{\ee}{\end{eqnarray}}
\newcommand{\EX}{\mathbf{E}}
\newcommand{\PR}{\mathbf{P}}
\newcommand{\mc}{\mathcal}
\newcommand{\mbf}{\mathbf}
\newcommand{\GI}{\mathsf{GI}}
\newcommand{\Mq}{\mathsf{M}}
\newcommand{\cH}{\mathcal{H}}
\newcommand{\cX}{\mathcal{X}}
\newcommand{\cN}{\mathcal{N}}
\newcommand{\cS}{\mathcal{S}}
\newcommand{\qN}{B} % Kraus operator
\title{The Classical Capacity of Additive Quantum Queue-Channels}
\author{Prabha Mandayam, Krishna~Jagannathan,
        and~Avhishek Chatterjee% <-this % stops a space
        \thanks{P. Mandayam is with the Department of Physics, IIT Madras, Chennai, India (e-mail: prabhamd@iitm.ac.in).}% <-this % stops a space
\thanks{K.~Jagannathan and A.~Chatterjee are with the Department of Electrical Engineering, IIT Madras, Chennai , India (e-mail: \{krishnaj,avhishek\}@ee.iitm.ac.in). AC acknowledges the Department of Science and Technology, Govt. of India for its support through  SERB/SRG/2019/001809 and INSPIRE/04/2016/001171.}%
\thanks{The material in this paper was presented in part at the 2019 IEEE 20th International Workshop on
Signal Processing Advances in Wireless Communications (SPAWC) \cite{8815462}, \textcolor{black}{and the National Conference on Communications 2019 \cite{JagannathanCM2019}.}}
}
\begin{document}

\maketitle

\begin{abstract}
We consider a setting where  a stream of qubits is processed {sequentially}. We derive fundamental limits on the rate at which classical information can be transmitted using qubits that decohere as they wait to be processed. Specifically, we model the sequential processing of qubits using a single server queue, and derive  expressions for the classical capacity of such a quantum `queue-channel.' Focusing on two important noise models, namely the \emph{erasure channel} and the \emph{depolarizing channel}, we obtain  explicit single-letter capacity formulas in terms of the stationary waiting time of qubits in the queue. Our capacity proof also implies that a `classical' coding/decoding strategy is optimal, i.e., an encoder which uses only orthogonal product states, and a decoder which measures in a fixed product basis, are sufficient to achieve the classical capacity of both queue-channels. Our proof technique for the converse theorem generalizes readily --- in particular,  whenever the underlying quantum noise channel is \emph{additive}, we can obtain a single-letter upper bound on the classical capacity of the corresponding quantum queue-channel. More broadly, our work begins to quantitatively address the impact of decoherence on the performance limits of quantum information processing systems.

\end{abstract}

%\begin{IEEEkeywords}
%channel capacity, quality of service, queuing 
%\end{IEEEkeywords}

\section{Introduction}
\label{sec:intro}
Unlike classical bits, quantum bits (or qubits) undergo rapid \emph{decoherence} in time, due to certain unavoidable physical interactions with their environment~\cite{NCBook}. Information stored in a qubit may be completely or partially lost as the qubit decoheres. The nature and the rate of the noise process  depends on the particular physical implementation of the quantum state, as well as other factors such as the environment and  temperature.  For example, superconducting Josephson junction based qubits have average coherence times that are typically of the order of a few tens of microseconds~\cite[Table 2]{wendin2017quantum}. Decoherence or noise poses a major challenge to the scalability of quantum information processing systems --- therefore, it is imperative to obtain a quantitative understanding of the impact of decoherence on the performance limits of quantum information processing systems. 

In this paper, we consider a setting where a stream of qubits is processed \emph{sequentially}. {There are several possible scenarios where qubits may wait to be processed. For example, in the context of quantum computation, subsets of qubits are often `idling' in a quantum circuit, waiting their turn while other qubits are undergoing gate operations. Such idle qubits undergo decoherence as they wait to get processed, and the resulting errors are modelled as `storage errors' in the context of quantum fault tolerance~\cite{aliferis_FT05}.}

{In the context of quantum communication, such a sequential processing of qubits arises naturally when we study consider the problem of entanglement distribution over quantum networks~\cite{KW_2019}. In this case, classical information is encoded in pairs of entangled photons which are transmitted over lossy fibre. To sustain the entanglement over long distances, the photons undergo an entanglement swap operation at quantum repeaters that are placed at intermediate distances. Every photon that arrives at a repeater has to wait for its partner to arrive in order to be processed, in which time they may get erased or depolarized, depending on the nature of communication channel as the quantum memory used to realise the quantum repeater~\cite{qrepeater_RMP11}.} We derive fundamental limits on the rate at which classical information can be transmitted using qubits that \emph{decohere as they wait to be processed.}
 
To be more precise, we model the sequential processing of a stream of qubits using a \emph{single server queue}~\cite{Kleinrock1975_I}. The qubits arrive to be processed at a `server' according to some stationary point process. \textcolor{black}{For example, in the context of quantum communication, qubits are generated by optical sources which have an inherent randomness due to the underlying physical processes. The commonly used heralded single-photon sources, for example, rely on a nonlinear physical interaction which is probabilistic in nature~\cite{felinto2006}.} The server processes the qubits at a fixed average rate. The qubits undergo decoherence (leading to errors) as they wait to be processed, and the probability of error/erasure of each qubit is modeled as a function of the time spent in the queue by \emph{that} qubit. After the processing completes, the qubits are measured and interpreted as classical bits. We call this system a `quantum queue-channel' and characterise the classical capacity of such a quantum queue-channel. 

 \subsection{Related Work}
An information theoretic notion of reliability of a queuing system with state-dependent  errors was introduced and studied in \cite{ChatterjeeSV2017}, where the authors considered queue-length dependent errors motivated mainly by human computation and crowd-sourcing. The classic paper of Anantharam and Verd\'{u} considered timing channels where information is encoded in the times between consecutive information packets, and these packets are subsequently processed according to some queueing discipline \cite{AnantharamV1996}. Due to randomness in the sojourn times of packets through servers, the encoded timing information is distorted, which the receiver must decode. 
In contrast to~\cite{AnantharamV1996}, \emph{we are not concerned with information encoded in the timing between packets --- in our work, all the information is in the qubits.}
 
In a recent paper \cite{JagannathanCM2019}, we considered the queue-channel problem described above, and derived the capacity of an erasure queue-channel under certain technically restrictive conditions. Specifically,  \cite{JagannathanCM2019} restricts the encoder to using only  orthogonal product states, and the decoder measures in a \emph{fixed} product basis. In this restricted setting, qubits are essentially made to behave like `classical bits that decohere,' and the underlying quantum channel effectively simulates a classical channel known as the \emph{induced classical} channel. In general, the classical capacity of the underlying quantum channel could be \emph{larger} than the capacity of the induced classical channel, because the former allows for entangled channel uses and more general (joint) measurements at the decoder. 

As an aside, we remark that the erasure queue-channel treated in \cite{JagannathanCM2019} can be used to model a multimedia-streaming scenario, where information packets become useless (erased) after a certain time. 

\subsection{Our Contributions}
In this paper, we completely characterise the classical capacity of quantum queue-channel for two important noise models, namely the \emph{erasure channel} and the \emph{depolarizing channel.} Specifically, we allow for possibly entangled channel uses by the encoder, and arbitrary measurements at the decoder. 

Obtaining the queue-channel capacity for an arbitrary noise model involves overcoming some technical challenges. First, the quantum queue-channel is non-stationary. Second, the erasure events corresponding to consecutive qubits are correlated through their waiting times, which are in turn governed by the queuing process. This leads to memory across consecutive channel uses. 

Interestingly, we note that the capacity result in \cite[Theorem 1]{JagannathanCM2019} for the induced classical channel readily offers an `achievable rate' for the quantum erasure queue-channel --- after all, any rate that is achievable with the restrictions in \cite{JagannathanCM2019} can be achieved without those restrictions. Much of the technical challenge therefore lies in proving a `converse theorem,' i.e., in showing a capacity upper bound that matches the expression in  \cite[Theorem 1]{JagannathanCM2019}. The key contributions in this paper can be summarised as follows:

\subsubsection{Upper bound on the capacity of additive quantum queue-channels} 
We first show that whenever the underlying quantum noise channel is \emph{additive}, we can obtain  a single-letter upper bound on the classical capacity of the corresponding quantum queue-channel. Our upper bound proof proceeds via the following key steps. The first step involves showing a certain conditional independence of $n$ consecutive channel uses, \emph{conditioned} on the sequence of qubit waiting times $(W_{1}, W_{2}, \ldots, W_{n})$. Specifically, we show that the $n$-qubit queue-channel factors into a \emph{tensor product} of single-use channels, for any given sequence of the waiting times. Next, we use a general capacity upper bound proved in~\cite[Lemma 5]{HN03_genCapacity} for non-stationary channels, which we then simplify using the conditional independence result from the first step.

\subsubsection{The Quantum Erasure Queue-Channel}
The general upper bound proved for any additive quantum queue-channel, along with the celebrated \emph{additivity} result of Holevo~\cite{holevo2004} for the quantum erasure channel immediately gives us a single-letter capacity upper bound. For the erasure queue-channel, we show that the upper bound is indeed the \emph{same} as the capacity expression derived in \cite[Theorem 1]{JagannathanCM2019} for the induced classical channel. That is, the capacity of the quantum erasure queue-channel does not increase by allowing for entangled channel uses by the encoder, and arbitrary measurements at the decoder. In other words, the  classical coding/decoding strategy in \cite{JagannathanCM2019} (proposed for the induced classical channel), is indeed sufficient to realise the classical capacity of the underlying  quantum erasure queue-channel. Furthermore, we show that the capacity remains the same, regardless of whether the arrival and departure times of the qubits (and hence their waiting times) is available at the decoder.  

In hindsight, this result may not be altogether surprising, considering that the classical capacity of the \emph{memoryless} quantum erasure channel is the same as the capacity of the classical erasure channel. In other words, a classical coding strategy is sufficient to realise the classical capacity of the (memoryless) quantum erasure channel --- see \cite{bennett97}.

\subsubsection{The Depolarising Queue-Channel} When the underlying noise model is a depolarising channel, we obtain the capacity expression for the queue channel, assuming that the decoder has access to the sequence of arrival and departure times (and hence the waiting times) of the qubits. The upper bound proof is technically similar to the erasure case, except that we appeal to the additivity result for the depolarising channel, due to King~\cite{king2003}. For the achievability part, we consider the corresponding induced classical channel (which turns out to be a binary symmetric queue-channel) and use a classical coding/decoding strategy to obtain an achievable rate which matches the upper bound. When the waiting times are not available at the decoder, the achievable rate that we obtain is strictly smaller than the upper bound enforced by additivity arguments --- in such a case, we can only identify the interval in which the capacity lies.

\subsection{Organisation} The remainder of this paper is organised as follows. Section~\ref{sec:model} details the system model and preliminaries. Section~\ref{sec:queueCapacity} presents the relevant channel capacity definitions and proceeds to present our main technical results. Section~\ref{sec:induced_class} deals with the induced classical channels and their capacities --- the achievability part of our capacity results also come from this section. Section~\ref{sec:concl} concludes the paper and proposes directions for future work.

\section{System Model and Preliminaries}
\label{sec:model}
The model we study is similar to the one considered in \cite{JagannathanCM2019}. Specifically, a source generates a classical bit stream, which is encoded into qubits. These qubits arrive at a continuous-time single-server queue according to a stationary point process of rate $\lambda.$  To be more explicit, the single-server queue is characterised by (i) A server that processes the qubits in the order in which they arrive, i.e., in a First Come First Served (FCFS) fashion\footnote{The FCFS assumption is not required for our results to hold, but it helps the exposition.}, and (ii) An `unlimited buffer' --- that is, there is no limit on the number of qubits that can queue up as they wait to be processed.  We assume that the service times for the qubits are independent and identically distributed (i.i.d.) random variables. The service time of the $j$th qubit is denoted by $S_j$ and has a cumulative distribution $F_S.$  The average service rate of each qubit is $\mu,$ i.e., $\EX_{F_S}[S] = 1/\mu.$ For stability of the queue, we assume $\lambda<\mu$. For ease of notation let us assume $\mu=1$. (Our results easily extend to general $\mu$). Let $A_j$ and $D_j$ be the arrival and the departure epochs of $j$th qubit, respectively and $W_j = D_j - A_j$ be the total time that $j$th qubit spends in the queue. Adopting a standard convention, we use upper case letters to denote random variables, and the corresponding lower case letters to denote the realized values of the random variable. For example, $W_j$ is the random variable that denotes the total time spent in the queue by the $j$th qubit, while $w_j$ denotes a specific realization of $W_j.$

The probability that the $j$th qubit undergoes erasure/error is modelled as an explicit function of its waiting time $w_j,$ as we specify in the next subsection. After getting processed at the  server, the qubits are measured and interpreted as classical bits (see Fig.~\ref{fig:diagram}).

\begin{remark}The main technical results in this paper hold as long as the queueing system described above is both \emph{stationary} and \emph{ergodic}. Stationarity essentially means that the queue eventually reaches a `steady-state' behaviour. In particular it implies that random variables of interest (such as waiting time of a qubit, queue length etc.) converge in distribution to a corresponding stationary distribution. Ergodicity means that the long-term \emph{time average} of waiting times, queue lengths, or functions thereof, converge (almost surely) to the corresponding \emph{ensemble average} (or expectation), taken over the stationary distribution. Most stable queuing models of interest (including the $\Mq/\GI/1$ queue and $\GI/\GI/1$ queue with non-arithmetic inter-arrival/service distributions) satisfy these two properties.

\textcolor{black}{We further remark that while we consider the queuing systems that are stationary, the queue-channel is non-stationary, because we start with an \emph{empty} queue, and send a \emph{finite} length codeword. Each symbol then undergoes waiting time dependent errors, which corresponds to a non-stationary channel.}
\end{remark}

\begin{figure}
  \centering
  \includegraphics[width=3.5in]{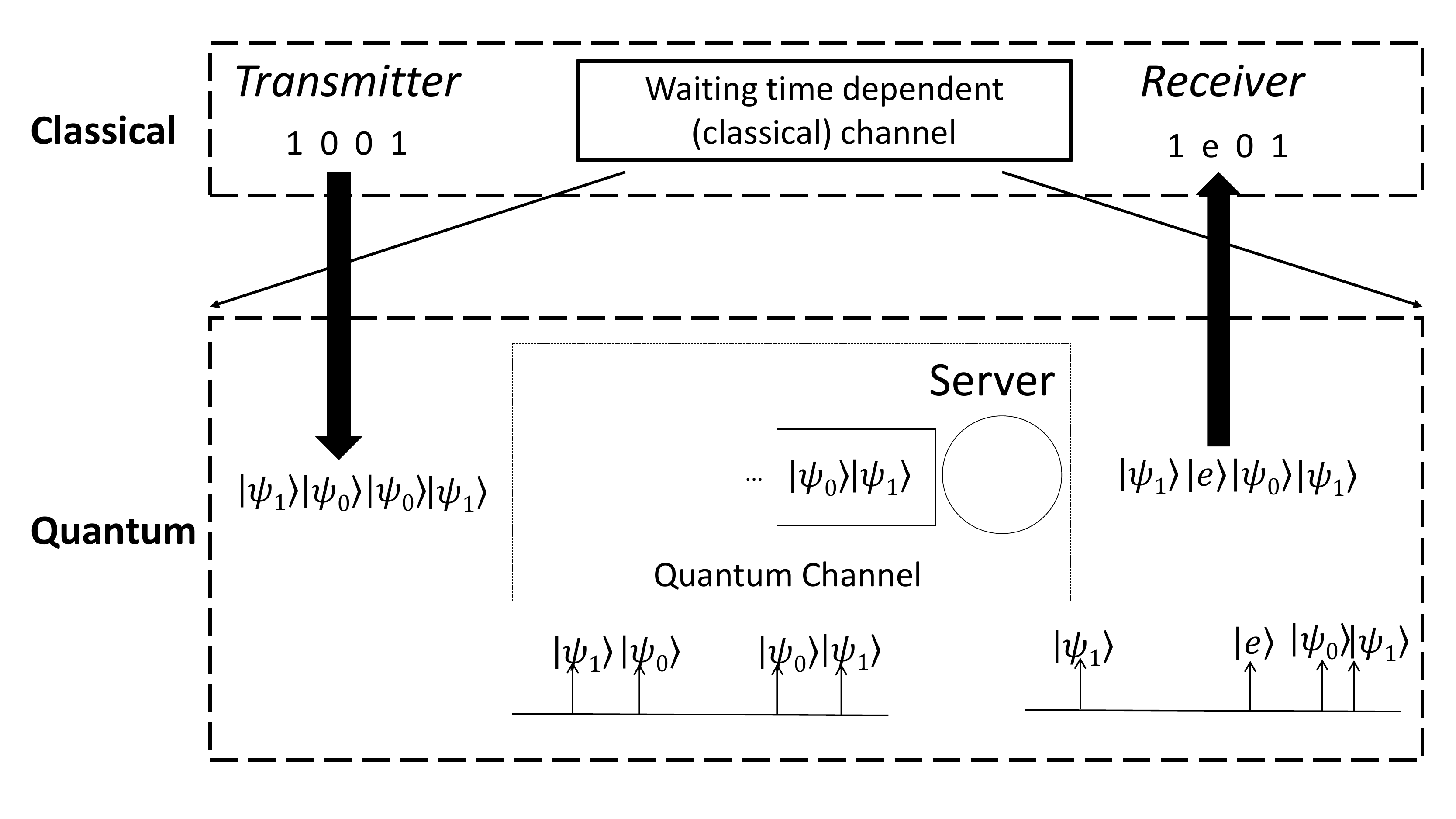}
  \caption{Schematic of the queue-channel depicting the case of quantum erasure.}
  \label{fig:diagram}
\end{figure}

\subsection{Noise Model}

As the qubits wait to be served, they undergo decoherence, leading to errors at the receiver. This decoherence is modeled mathematically as a completely positive trace preserving (CPTP) map on the the qubit states \cite{NCBook}. 

\textcolor{black}{Let the finite set $\mathcal X$ denote the input alphabet.} In general, a given input symbol $X_{j} \in \mathcal{X}$, is encoded as a positive trace-class operator (called the \emph{density operator}) $\rho_{j}$ on a Hilbert space $\mathcal{H}^{A}$ of dimension $\vert \mathcal{X}\vert$. The noise process is denoted as the map $\cN: \cS(\cH^{A}) \rightarrow \cS(\cH^{B})$, where $\cS(\cH^{A})$ ($\cS(\cH^{B})$) denotes the set of density operators on the \emph{input} Hilbert space $\cH^{A}$ (\emph{output} Hilbert space $\cH^{B}$). The map $\cN$ is often referred to as the (quantum) \emph{noise channel}. 

The probability that a given state $\rho_{j}$ undergoes decoherence is modelled as a function of the waiting time $W_{j},$ and the noise channel is accordingly parameterized in terms of the waiting time as $\cN_{W_{j}}$. The noisy output state after the action of the map $\cN_{W_{j}}$ is denoted as $\sigma_{j} \equiv \cN_{W_{j}}(\rho_{j})$. This noisy state is measured by the receiver by performing a general quantum measurement and decoded as the output symbol $Y_j \in \mathcal{Y}$. 

An $n$-length transmission over the above channel is denoted as follows. Inputs are drawn from the set $\cX^{(n)}$ of length $n$ symbols $\{ X^{n} = (X_{1}X_{2}\ldots X_n)\}$, and encoded into density operators $\rho_{X^{n}} \in \cS((\cH^{A})^{\otimes n})$ on an $n$-fold tensor product of the input Hilbert space $\cH^{A}$. The $n$-qubit channel is denoted $\cN^{(n)}_{W^n}$ and parameterized by the sequence of waiting times $W^{n} = (W_{1}, W_{2}, \ldots, W_{n})$. Note that the encoded state $\rho_{X^{n}}$ could be entangled across multiple channel uses. Furthermore, in general, the $n$-qubit queue-channel is not a stationary, memoryless channel and does not automatically factor into an $n$-fold tensor product of single qubit channels. We refer to the sequence of $n$-qubit channels $\vec{\cN}_{\vec{W}} = \{\cN^{(n)}_{W^{(n)}}\}_{n=1}^{\infty}$, which are parameterised by the corresponding waiting time sequences $\{W^{(n)}\}_{n=1}^{\infty}$, as a \emph{quantum queue-channel}, and characterize the classical capacity of this system (in bits/sec).

\textcolor{black}{In this paper, we  model the erasure/error probability of a qubit as an explicit function of its time spent in the queue. Thus, for the $j$th state $\rho_{j}$ with waiting time $w_j,$ we denote by $p(w_j)$ the probability its error/erasure, where $p:[0,\infty) \to [0,1]$ is typically increasing. We pay special attention to two important noise models:} \begin{itemize}
 \item[(i)]The  quantum erasure channel~\cite{wildeBook} which acts on the $j$th state $\rho_{j}$ with waiting time $w_j$ as follows: $\rho_{j}$ remains unaffected with probability $1 - p(w_{j})$, and is erased to a (fixed) erasure state $|e\rangle\langle e|$ with probability $p(w_{j}).$
\item[(ii)]The quantum depolarizing channel, which acts on a given qubit $\rho_{j}$ as, $\rho_{j} \rightarrow (1-p(w_{j}))\rho_{j} + p(w_{j})\frac{I}{2},$ where $I/2$ is the maximally mixed state~\cite{NCBook}.
\end{itemize}
 In several physical scenarios, the decoherence time of a single qubit maybe modelled as an exponential random variable. Thus, the  probability of a qubit erasure/error after waiting for a time $w$ is given by $p(w)=1-e^{-\kappa w}$, where $1/\kappa$ is a characteristic time constant of the physical system under consideration~\cite[Section 8.3]{NCBook}.

\section{Quantum Queue-Channel Capacity}
\label{sec:queueCapacity}
We are interested in defining and computing the information capacity of the quantum queue-channel. \textcolor{black}{We follow the conventions and the definitions from \cite{AnantharamV1996} that were adopted in \cite{ChatterjeeSV2017} for defining the capacity of queue-channels.}

\subsection{Definitions}
Let $M$ be the message transmitted from a set $\mathcal{M}$ and $\hat{M} \in \mathcal{M}$ be its estimate at the receiver. We now define an $(n, R, \epsilon, T)$ code for classical communication over a quantum-queue channel $\vec{\cN}_{\vec{W}}$. \textcolor{black}{Let $A^n, D^n$  denote the arrival epoch and departure epoch sequences, respectively.}
\begin{definition}
An $(n, R, \epsilon, T)$ quantum code is characterized by an encoding function $X^{n} = f(M)$, leading to an encoded $n$-qubit quantum state $\rho_{X^{n}}$ corresponding to message $M$, and a decoding function $$\hat{M} = g(\Lambda, \cN^{(n)}(\rho_{X^n}), A^n, D^n)$$ corresponding to a measurement $\Lambda$ at the receiver's end, {\color{black} where the cardinality of the message set $|\mathcal{M}| = 2^{nR}$, the expected total time for all the symbols of any codeword to reach the receiver is less than $T$, and the average probability of error of the code is less than $\epsilon$.

The measurement $\Lambda$ at the decoder is a positive operator valued measure (POVM) with operator elements $\{\Lambda_{\hat{M}}\}$ corresponding to message $M$.
The probability of error corresponding to message $M$ is \[ p_{e}(M) = \tr \left[ (I - \Lambda_{\hat{M}}) \cN^{(n)}_{W^{(n)}}(\rho_{X^n}) \right],\]
and the average probability of error is  given by $\frac{1}{|\mc{M}|}\sum_{M\in \mc{M}}p_e(M)$.}

{\color{black}As the average probability of error at the decoder for an $(n, R, \epsilon, T)$ quantum code is  less than $\epsilon$, the code is callled an $\epsilon$-achievable code with rate $\frac{R}{T}$.}

\end{definition}
\begin{definition}
 {\color{black}A rate $\bar{R}$ is said to be achievable if for any $\epsilon\in(0,1),$ there exists an $\epsilon$-achievable code with rate $\bar{R}$, or equivalently, if for any $\epsilon\in(0,1)$ there exists an $(n, {R}, \epsilon, T)$ code with $\bar{R}=\frac{R}{T}$.}
\end{definition}
\begin{definition}
\label{def:capacity}
The information capacity of the queue-channel is the supremum of all achievable rates for a given arrival and service process, and is denoted by $C$ bits per unit time.
\end{definition}
Note that the information capacity of the queue-channel depends on the arrival process, the service process, and the noise model. We assume that the receiver knows the realizations of the arrival and the departure times of each symbol, although we point out some results that hold even without this assumption.

\textcolor{black}{There are several physical scenarios where the arrival and departure times of each qubit is known, the simplest example being that of heralded single-photon sources which are ubiquitous in quantum communication networks (see~\cite{felinto2006} for example). The source typically produces a pair of entangled photons, one of which is used as a reference by the encoder/sender and the other photon is used in the actual communication protocol. The reference photon provides information about the arrival time, whereas the departure time can be obtained from the time-stamp on the receiver's detector.}

\textcolor{black}{Even in scenarios where this assumption may not be practical or realistic, conditioning on the waiting times offers a  useful conceptual handle towards tackling the problem. We also note that two key results in the paper---namely the general upper bound in Theorem~\ref{thm:queueCapacity_ub}, and the exact capacity of the erasure queue-channel in Theorem~\ref{thm:queueCapacityErasure}, hold irrespective of the receiver's knowledge of the waiting times.}

% % % % % % % % % % % % % % % % % % % % % % % % % % %

\subsection{Holevo Information and Additivity}

Before we proceed, we briefly review some well established concepts from quantum Shannon theory, which will be useful in analyzing the queue-channel capacity. An important measure that characterizes the classical capacity of quantum channels is the Holevo information of a quantum channel, also referred to as the Holevo capacity of a quantum channel~\cite{holevo1973}.
\begin{definition}[Holevo Information]
The Holevo Information of a quantum channel $\cN$ is defined as the entropy difference~\cite{wildeBook, watrous2018},
\begin{align}
\chi( \cN ) &:= \sup_{\{P_{x}, \rho_{x} \}} \chi (\{P_{x}, \cN(\rho_{x}) \}) \nonumber \\&= \sup_{\{P_{x}, \rho_{x} \}} H\left( \sum_{x}P_{x}\cN(\rho_{x}) \right) - \sum_{x}p_{x}H(\cN(\rho_{x})),
\end{align}
where the supremum is taken over all input ensembles $\{P_{x}, \rho_{x} \}$  and $H(\rho) = -\tr [\rho \log\rho]$ is the von Neumann entropy associated with a density operator $\rho$.
\end{definition}

The celebrated Holevo-Schumacher-Westmoreland theorem states that the classical capacity of a quantum channel, denoted as $C(\cN)$, is the \emph{regularized} Holevo information across independent channel uses~\cite{holevo1998,SW1997}. In other words,
\[C(\cN) = \lim_{n \rightarrow \infty}\frac{\chi(\cN^{\otimes n})}{n},\]
where $\cN^{\otimes n}$ is the product channel corresponding to $n$ independent channel uses. The Holevo information of the channel $\cN$ is said to be \emph{additive} if it satisfies $ \chi (\cN \otimes \cN) = 2\chi (\cN).$ It is easy to see that for quantum channels whose Holevo information is additive, the classical capacity of the channel is simply equal to the Holevo information. Additivity of the Holevo information further implies that the classical capacity of such channels is achievable via a \emph{classical} encoding and decoding strategy -- the optimal coding strategy does not require entangled inputs at the encoder or collective measurements at the decoder. Examples of channels for which the Holevo information is known to be additive include the quantum erasure channel~\cite{holevo2004} and the quantum depolarizing channel~\cite{king2003}. 

\begin{definition}[Additive quantum queue-channel] A quantum queue-channel $\vec{\cN}_{\vec{W}}$ is said to be additive if the Holevo information of the underlying single-use quantum channel $\cN$ is additive. Specifically, additivity of the Holevo information of the quantum channel $\cN$ implies
\[ \chi (\cN_{W_{1}} \otimes \cN_{W_{2}}) = \chi (\cN_{W_{1}}) + \chi (\cN_{W_{2}}) . \] 
\end{definition}

\subsection{Upper bound on the Capacity for Additive Quantum Queue-Channels} \label{sec:additive_ub}
We first invoke the capacity formula obtained in~\cite{HN03_genCapacity}, for the classical capacity of general quantum channels which are neither stationary nor memoryless. 
\begin{proposition}
\label{prop:expression}
The capacity of the quantum queue-channel (in bits/sec) described in Sec.~\ref{sec:model} is given by
\begin{align}
C = \lambda \sup_{\{\vec{P}, \vec{\rho}\}}\underline{\mathbf{I}} (\, \{\vec{P},\vec{\rho}\},\vec{\cN}_{\vec{W}} \, ),
\label{eq:cap_exp2}
\end{align}
where, $\underline{\mathbf{I}}( \, \{ \vec{P}, \vec{\rho} \, \}, \vec{\cN}_{\vec{W}} \, )$ is the quantum spectral inf-information rate originally defined in~\cite{HN03_genCapacity}. \textcolor{black}{We have stated this definition in Appendix~\ref{sec:qIbar} for completeness.} Here, $\vec{P}$ is the totality of sequences $\{P^{n}(X^{n})\}_{n=1}^{\infty}$ of probability distributions (with finite support) over input sequences $X^{n}$, and $\{\vec{\rho}\}$ denotes the sequences of states $\{\rho_{X^{n}}\}$ corresponding to the encoding $X^{n}\rightarrow \rho_{X^{n}}$. Finally,  $\vec{\cN}_{\vec{W}}$ denotes the sequence of channels $\{\cN^{(n)}_{W^{(n)}}\}_{n=1}^{\infty}$, which are parameterised by the corresponding waiting time sequences $\{W^{(n)}\}_{n=1}^{\infty}$.
%On the other hand, when the receiver does not have that information, the capacity is, 
%\begin{align}
%C(F_A) = \lambda \sup_{\PR(\mathbf{X})}\underline{\mathbf{I}} (\mbf{X}; \mbf{Y})\mbox{.} 
%\label{eq:cap_exp3}
%\end{align}
%Here, $\underline{\mathbf{I}}$ is the usual notation for inf-information rate \cite{VerduH1994}.
\end{proposition}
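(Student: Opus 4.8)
The plan is to derive the formula in two moves: first lift the Hayashi--Nagaoka general capacity theorem for arbitrary (non-stationary, non-memoryless) sequences of quantum channels to the queue-channel, obtaining the capacity \emph{per channel use}; then rescale from ``per channel use'' to ``per unit time'' using the stability and ergodicity of the queue. We stress at the outset that this proposition is essentially a faithful translation of the general quantum capacity formula to our setting; the substantive work comes later, in single-letterizing this expression for additive noise channels (Theorem~\ref{thm:queueCapacity_ub}) and in evaluating it for the erasure channel (Theorem~\ref{thm:queueCapacityErasure}).

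First I would regard $\vec{\cN}_{\vec{W}} = \{\cN^{(n)}_{W^{(n)}}\}_{n=1}^{\infty}$ as a single, well-defined sequence of channels with classical input alphabet $\cX$ and output consisting of the noisy quantum state together with the classical side information $A^n, D^n$ available at the decoder (equivalently, the waiting-time vector $W^{(n)}$). For each $n$, this channel is the mixture, over the stationary law of $W^{(n)}$ induced by the queueing dynamics, of the product maps $\bigotimes_{j=1}^{n} \cN_{W_j}$; it is a legitimate CPTP map even though it is neither stationary nor memoryless. The general classical-capacity theorem of \cite{HN03_genCapacity} applies to exactly such sequences and states that the optimal number of bits transmissible per channel use equals $\sup_{\{\vec{P},\vec{\rho}\}} \underline{\mathbf{I}}(\{\vec{P},\vec{\rho}\},\vec{\cN}_{\vec{W}})$, the supremum of the quantum spectral inf-information rate over all input-distribution/encoding sequences. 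The probability space underlying the $\liminf$-in-probability in $\underline{\mathbf{I}}$ is the joint law of message, input, channel noise \emph{and} waiting times, so this quantity already averages over the queueing randomness, and revealing $A^n, D^n$ to the decoder is harmless since it is part of the declared channel output.

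Second --- the only step specific to queues --- I would convert units. By the definitions of Section~\ref{sec:queueCapacity}, a rate $\bar{R}$ is achievable iff, for every $\epsilon$, there is an $(n,R,\epsilon,T)$ code with $\bar{R} = R/T$, where $T$ bounds the expected time for all $n$ symbols of a codeword to reach the receiver. Writing $T_n$ for that time, stability ($\lambda < 1$), stationarity and ergodicity of the queue give $T_n/n \to 1/\lambda$ almost surely and in expectation (a Loynes/ergodic-theorem argument: the $n$-th arrival and departure epochs both grow like $n/\lambda$). Hence $\bar{R} = (R/n)(n/T_n) \to \lambda \cdot (R/n)$, and taking the supremum over codes turns the per-channel-use capacity into $C = \lambda \sup_{\{\vec{P},\vec{\rho}\}} \underline{\mathbf{I}}(\{\vec{P},\vec{\rho}\},\vec{\cN}_{\vec{W}})$.

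The main obstacle I anticipate is the first step: the Hayashi--Nagaoka theorem is stated for a deterministic sequence of channels, whereas here the sequence depends on the random vector $W^{(n)}$. One must verify that forming the channel mixture and then computing $\underline{\mathbf{I}}$ is equivalent to conditioning on $W^{(n)}$, computing the spectral inf-information rate for each fixed realization, and averaging --- i.e.\ that the $\liminf$-in-probability interacts correctly with the mixture, and that the information-spectrum quantities are unaffected by passing $A^n, D^n$ to the receiver. Once this bookkeeping is in place, the remainder is the routine unit conversion of the second step.
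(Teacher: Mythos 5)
Your proposal is correct and follows essentially the same route as the paper, which offers no formal proof of this proposition beyond invoking the Hayashi--Nagaoka general capacity formula for non-stationary, non-memoryless channel sequences and rescaling by the throughput $\lambda$ (exactly as the paper does explicitly for the induced classical channel in Proposition~\ref{prop:expressionInd}, where the channel is likewise treated as the mixture over the law of $W^{(n)}$). The two bookkeeping points you flag --- that the random waiting times are absorbed into the channel/output definition, and that the per-use-to-per-second conversion rests on stability and ergodicity of the queue --- are precisely the points the paper relies on implicitly, so your write-up is, if anything, more complete than the paper's.
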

\begin{figure*}[!t]
	\begin{eqnarray}
	& & \ \ \cN^{(n)}_{W^{(n)}} (\rho_{12\ldots n}) \nonumber \\
	&=& \sum_{k_{1}, k_{2}, \ldots, k_{n}} q_{k_{1}k_{2}\ldots k_{n}}(W_{1},W_{2}, \ldots, W_{n}) \, \qN_{k_{1}}\otimes \qN_{k_{2}}\ldots \otimes \qN_{k_n} \left(\rho_{12\ldots n}\right) \qN^{\dagger}_{k_{1}}\otimes \qN^{\dagger}_{k_{2}}\ldots \otimes \qN^{\dagger}_{k_{n}} \nonumber \\
	&=& \sum_{k_{1}, k_{2}, \ldots, k_{n}} q_{k_{1}}(W_{1})q_{k_{2}}(W_{2})\ldots q_{k_{n}}(W_{n}) \qN_{k_{1}}\otimes \qN_{k_{2}}\ldots \otimes \qN_{k_{n}} \left(\rho_{12\ldots n}\right) \qN^{\dagger}_{k_{1}}\otimes \qN^{\dagger}_{k_{2}}\ldots \otimes \qN^{\dagger}_{k_{n}} \nonumber \\
	&=& \left(\cN_{W_{1}}\otimes \cN_{W_{2}}\ldots \otimes \cN_{W_{n}} \right) (\rho_{12\ldots n}) . \label{eq:tensorproduct}
	\end{eqnarray}
	\hrulefill
	%The spacer can be tweaked to stop underfull vboxes.
	%\vspace*{4pt}
\end{figure*}
We are now ready to state and prove a general upper bound for the capacity of additive quantum queue-channels. 
\begin{theorem}\label{thm:queueCapacity_ub}
For an additive quantum queue-channel $\vec{\cN}_{\vec{W}}$, the capacity is bounded as, 
\[ C \leq \lambda~\EX_{\pi}\left[ \chi(\cN_{W}) \right]\ {\rm bits/sec.}, \] 
irrespective of the receiver's knowledge of the arrival and the departure times. Here $\chi(\cN_{W})$ denotes the Holevo information of the single-use quantum channel corresponding to waiting time $W$ and $\pi$ is the stationary distribution of the waiting time in the queue.
\end{theorem}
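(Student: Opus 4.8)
The plan is to combine three ingredients: (i) the general capacity formula in Proposition~\ref{prop:expression} expressing $C$ in terms of the quantum spectral inf-information rate $\underline{\mathbf{I}}(\{\vec P,\vec\rho\},\vec{\cN}_{\vec W})$; (ii) a conditional tensor-product decomposition of the $n$-qubit queue-channel, conditioned on the waiting-time vector $W^{(n)}$; and (iii) an application of the single-letter bound $\underline{\mathbf{I}}(\{\vec P,\vec\rho\},\vec{\cN}_{\vec W}) \le \liminf_n \tfrac1n \sum_{j=1}^n \chi(\cN_{W_j})$ that is available from~\cite[Lemma 5]{HN03_genCapacity}, followed by additivity and an ergodic averaging step. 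First I would establish the conditional independence: for a \emph{fixed} realization $w^{(n)}=(w_1,\dots,w_n)$ of the waiting times, each qubit passes independently through $\cN_{w_j}$, since the noise acting on qubit $j$ depends only on $w_j$ and, given the waiting times, the Kraus coefficients factorize as $q_{k_1\cdots k_n}(w_1,\dots,w_n)=\prod_j q_{k_j}(w_j)$. This is exactly the computation displayed in Eq.~\eqref{eq:tensorproduct}, which yields $\cN^{(n)}_{w^{(n)}} = \cN_{w_1}\otimes\cN_{w_2}\otimes\cdots\otimes\cN_{w_n}$. The subtlety to handle carefully here is that the waiting times $W_j$ are themselves correlated random variables determined by the queueing dynamics, so the channel $\cN^{(n)}_{W^{(n)}}$ is \emph{not} a tensor product when $W^{(n)}$ is random — the factorization holds only conditionally, and one must keep the conditioning explicit throughout.

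Next I would feed this into the spectral-rate machinery. Using the general non-stationary upper bound from~\cite{HN03_genCapacity}, one has for any input sequence $\{\vec P,\vec\rho\}$ that $\underline{\mathbf{I}}(\{\vec P,\vec\rho\},\vec{\cN}_{\vec W}) \le \limsup_n \tfrac1n \chi(\cN^{(n)}_{W^{(n)}})$ in an appropriate (almost-sure or in-probability) sense with respect to the waiting-time process, where $\chi$ of the $n$-use channel is the supremum of the Holevo quantity over all (possibly entangled) input ensembles. Now invoke additivity: since the underlying single-use channel $\cN$ is additive by hypothesis, and conditioned on $W^{(n)}$ the channel is the tensor product $\bigotimes_j \cN_{W_j}$, the additivity property gives $\chi\!\left(\bigotimes_{j=1}^n \cN_{W_j}\right) = \sum_{j=1}^n \chi(\cN_{W_j})$, so that entangled inputs and joint measurements confer no advantage conditioned on the waiting times. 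Hence $\underline{\mathbf{I}} \le \limsup_n \tfrac1n \sum_{j=1}^n \chi(\cN_{W_j})$.

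Finally I would apply ergodicity of the queueing system (invoked in the Remark in Sec.~\ref{sec:model}): the time average $\tfrac1n\sum_{j=1}^n \chi(\cN_{W_j})$ of a bounded measurable function of the waiting times converges almost surely to the ensemble average $\EX_\pi[\chi(\cN_W)]$ under the stationary distribution $\pi$. (Boundedness of $\chi(\cN_W) \le \log\vert\cX\vert$ is immediate, which legitimizes passing the $\limsup$ through and interchanging with the supremum over inputs.) Combining with the prefactor $\lambda$ from Proposition~\ref{prop:expression} gives $C \le \lambda\, \EX_\pi[\chi(\cN_W)]$. Since the argument conditioned on the waiting-time realization never used the decoder's access to $A^n,D^n$ — conditioning only makes the channel \emph{easier} to analyze and can only increase the attainable rate — the bound holds whether or not the receiver knows the arrival/departure epochs, which is the last claim in the statement.

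I expect the main obstacle to be the careful invocation of~\cite[Lemma 5]{HN03_genCapacity} in the correct probabilistic mode: the spectral inf-information rate is defined via $\liminf$ in probability of a sequence of information-density random variables, and one must argue that the relevant quantity concentrates on (or is dominated by) the conditional Holevo sum $\tfrac1n\sum_j \chi(\cN_{W_j})$, uniformly enough in the input sequence that the supremum over $\{\vec P,\vec\rho\}$ can be taken \emph{inside} before the ergodic limit is applied. Handling the exchange of $\sup$, $\limsup$, and the ergodic averaging — and verifying that the conditioning on $W^{(n)}$ is legitimate within the spectral framework — is where the real care is needed; the tensor-product decomposition and the additivity step are comparatively routine once the conditioning is set up properly.
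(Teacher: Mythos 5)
Your proposal is correct and follows essentially the same route as the paper: the conditional tensor-product factorization (the paper's Lemma~\ref{lem:cond_ind}), the bound from \cite[Lemma 5]{HN03_genCapacity}, additivity, and ergodic averaging, with the waiting-time-informed bound automatically dominating the uninformed case. The interchange you flag as the main obstacle is handled in the paper by the elementary inequality $\sup\,\liminf \le \liminf\,\sup$, which goes in the right direction for an upper bound, so no uniformity argument is needed (and your occasional $\limsup$ in place of $\liminf$ is harmless since the ergodic limit exists almost surely).
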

\emph{Proof Outline:}
Obtaining the queue-channel capacity of a quantum channel poses certain technical challenges, since the error probabilities are correlated across different channel uses. In other words, the probability that the $i^{\rm th}$ qubit gets affected by an error is a function of its waiting time in the queue, which in turn depends on the waiting time of the previous $(i-1)^{\rm th}$ qubit and so on. Furthermore, the channel is non-stationary. However, for the queue-channel model considered here, the $n$-qubit queue-channel does factor into a \emph{tensor product} of single-use channels, \emph{conditioned} on the sequence of waiting times $(W_{1}, W_{2}, \ldots, W_{n})$. This is formally shown in Lemma~\ref{lem:cond_ind} below. 

In order to obtain an \emph{upper bound on capacity}, we proceed via the following key steps. First, we invoke an upper bound proved in~\cite[Lemma 5]{HN03_genCapacity}, to bound the capacity as the limit inferior of the \emph{Holevo information} of a sequence of quantum channels. Next, we use the tensor product from of the channel obtained in Lemma~\ref{lem:cond_ind} in conjunction with the fact that the Holevo information of the channel $\cN$ is additive. Finally, we invoke the ergodicity of the queue to obtain a single-letter expression for an upper bound on the queue-channel capacity.

\begin{lemma}[Conditional independence]\label{lem:cond_ind}
The $n$-qubit quantum queue-channel $\cN^{(n)}_{W^n}$ factors into a tensor product of single-use channels, conditioned on the waiting times  $(W_{1}, W_{2}, \ldots, W_{n})$.
\end{lemma}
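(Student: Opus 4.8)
The plan is to exhibit an explicit Kraus/operator-sum representation of the single-use channel $\cN_{W}$ and show that the $n$-qubit channel $\cN^{(n)}_{W^{n}}$, which models $n$ qubits sitting in the queue with respective waiting times $W_{1},\ldots,W_{n}$, is literally built by applying the noise to each qubit independently according to its own waiting time. Concretely, for a fixed qubit with waiting time $w$ write $\cN_{w}(\rho) = \sum_{k} q_{k}(w)\, \qN_{k}\rho \qN_{k}^{\dagger}$, where $\{\qN_{k}\}$ are waiting-time-independent operators and only the classical mixing weights $q_{k}(w)$ depend on $w$. For the erasure channel one takes $\qN_{0} = I$ with weight $1-p(w)$ and $\qN_{1} = |e\rangle\langle\cdot|$ (suitably) with weight $p(w)$; for the depolarizing channel one uses the Pauli operators $\{I,X,Y,Z\}$ with the standard depolarizing weights $\{1 - 3p(w)/4,\, p(w)/4,\, p(w)/4,\, p(w)/4\}$. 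The key structural point is that, \emph{conditioned} on the realization $(w_{1},\ldots,w_{n})$, the joint noise is the composition of the $n$ single-qubit maps acting on disjoint tensor factors, so its operator-sum representation is obtained by tensoring the individual Kraus operators and multiplying the scalar weights.

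Having set this up, the proof is essentially the display already typeset as equation~\eqref{eq:tensorproduct}: starting from
\[
\cN^{(n)}_{W^{(n)}}(\rho_{12\ldots n}) = \sum_{k_{1},\ldots,k_{n}} q_{k_{1}\ldots k_{n}}(W_{1},\ldots,W_{n})\; \qN_{k_{1}}\otimes\cdots\otimes\qN_{k_{n}} (\rho_{12\ldots n})\; \qN^{\dagger}_{k_{1}}\otimes\cdots\otimes\qN^{\dagger}_{k_{n}},
\]
I would argue that the joint weight factorizes, $q_{k_{1}\ldots k_{n}}(W_{1},\ldots,W_{n}) = \prod_{j=1}^{n} q_{k_{j}}(W_{j})$, because the decoherence acting on qubit $j$ depends only on that qubit's sojourn time $W_{j}$ and is applied independently of the other qubits. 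Substituting this factorization and regrouping the sum over each index $k_{j}$ separately collapses the right-hand side into $\bigl(\cN_{W_{1}}\otimes\cN_{W_{2}}\otimes\cdots\otimes\cN_{W_{n}}\bigr)(\rho_{12\ldots n})$, which is exactly the claimed tensor-product form. It is worth emphasizing (as the surrounding text does) that this factorization is a statement \emph{conditioned} on $W^{n}$: unconditionally the waiting times $W_{1},\ldots,W_{n}$ are strongly correlated through the queueing dynamics, so the channel has memory, but that correlation lives entirely in the classical distribution of $W^{n}$ and not in the action of the noise given $W^{n}$.

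The main conceptual obstacle — and the thing most worth stating carefully — is justifying the claim $q_{k_{1}\ldots k_{n}}(W_{1},\ldots,W_{n}) = \prod_{j} q_{k_{j}}(W_{j})$ at the right level of generality. For the erasure and depolarizing models this is immediate from the definition of the noise, since each qubit is independently passed through its own single-qubit CPTP map. More generally, one needs the modeling assumption that the physical decoherence mechanism acts on each qubit through a local (single-qubit) CPTP map whose parameters are a deterministic function of that qubit's waiting time alone, with no cross-qubit interaction in the noise process itself; under that assumption the composition of local maps on disjoint factors is a tensor product by the standard fact that $(\mc A\otimes \mathrm{id})\circ(\mathrm{id}\otimes\mc B) = \mc A\otimes\mc B$. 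There is essentially no heavy calculation here — the only care needed is (i) to keep the conditioning on $W^{n}$ explicit throughout, and (ii) to note that the encoded state $\rho_{X^{n}}$ may be entangled across the $n$ uses, which does \emph{not} affect the argument because the tensor-product structure is a property of the \emph{channel} $\cN^{(n)}_{W^{n}}$ as a map, independent of which (possibly entangled) state it is fed.
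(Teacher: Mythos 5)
Your proposal is correct and follows essentially the same route as the paper's proof: write the $n$-use channel in operator-sum form with waiting-time-independent Kraus operators and waiting-time-dependent classical weights, observe that conditioned on $W^{n}$ the joint weight factorizes as $\prod_{j} q_{k_{j}}(W_{j})$, and regroup the sum to obtain $\cN_{W_{1}}\otimes\cdots\otimes\cN_{W_{n}}$, exactly as in the displayed equation~\eqref{eq:tensorproduct}. Your added remarks --- that the factorization of the weights is a modeling assumption about the noise acting locally on each qubit as a function of its own sojourn time, and that possible entanglement of $\rho_{X^{n}}$ is irrelevant because the claim concerns the channel as a map --- are consistent with, and slightly more explicit than, the paper's exposition.
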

\begin{proof}
Consider a sequence of $n$ qubits transmitted via a quantum queue-channel with associated waiting times $W_{1}, W_{2}, \ldots, W_{n}$. The $n$-qubit channel maybe described via $n$-fold tensor-product operators of the form \textcolor{black}{$\{\qN_{1}\otimes \qN_{2}\otimes \ldots \otimes \qN_{n}\}$}, where each single-qubit operator $\qN_{i}$ is one of a finite set of \emph{noise} operators $\{E_{1}, E_{2}, \ldots, E_{L}\}$, which characterize the quantum channel $\cN$. Note that the operators $E_{i} \in \mathbb{M}_{2 \times 2}$ which belong to the space of $2 \times 2$ complex matrices are called the \emph{Kraus} operators associated with the quantum channel $\cN$. 

There are $L^{n}$ tensor-product operators of the form \textcolor{black}{$\{\qN_{1}\otimes \qN_{2}\otimes \ldots \otimes \qN_{n}\}$}, occurring with probabilities $q_{k_{1}k_{2}\ldots k_{n}} (W_{1}, W_{2}, \ldots, W_{n})$, with the indices $k_{i} \in \{ E_{1},E_{2},\ldots, E_{L}\}$, depending on which noise operator each $\qN_{i}$ corresponds to. This $n$-fold channel is a non-iid, correlated quantum channel in general, since the joint distribution $q_{k_{1}k_{2}\ldots k_{n}} (W_{1}, W_{2}, \ldots, W_{n})$ does not factor into a product of the individual error probabilities for each qubit. However, conditioned on the the waiting time sequence $W^{(n)} = (W_{1}, W_{2}, \ldots, W_{n})$, the joint distribution does factor as $ q_{k_{1}k_{2}\ldots k_{n}} (W_{1}, W_{2}, \ldots, W_{n}) = \Pi_{i} q_{k_{i}}(W_{i}).$
Therefore, conditioned on the waiting times $(W_{1}, W_{2}, \ldots, W_{n})$ we may represent the action of the $n$-qubit channel on any $n$-qubit state $\rho_{12\ldots n}$ as shown in \eqref{eq:tensorproduct}. In other words, conditioned on the waiting time sequence $W^{n}$, the $n$-qubit channel factors into an $n$-fold tensor product of the form $\cN_{W_{1}}\otimes \cN_{W_{2}}\otimes \ldots \otimes \cN_{W_{n}}$, as desired.
\end{proof}

We are now ready to prove the upper bound on the capacity of an additive quantum queue-channel. We assume that the sequence of waiting times $\vec{W}$ is available at the receiver.

\begin{proof} (Theorem~\ref{thm:queueCapacity_ub})
We start with an upperbound on the quantum inf-information rate proved in~\cite[Lemma 5]{HN03_genCapacity}:
\[ \underline{\mathbf{I}}( \, \{\vec{P},\vec{\rho} \, \},  \vec{\cN}_{\vec{W}}) \leq \liminf_{n\rightarrow \infty} \frac{1}{n} \chi( \, \{P^{(n)}, \rho_{X^{n}}\} ,\cN^{(n)}_{W^{(n)}} \, ) ,\]
where, $\chi( \{P^{(n)}, \rho_{X^{n}}\}, \cN^{(n)}_{W^{n}})$ is the Holevo information of the ensemble $\{ P^{(n)}(X^{n}), \cN^{(n)}_{W^{(n)}}(\rho_{X^{n}}) \}$. Consider now the Holevo information $\chi(\cN^{(n)}_{W^{(n)}})$ of the $n$-qubit quantum queue-channel, for a given sequence of waiting times $W^{(n)}$. Lemma~\ref{lem:cond_ind} implies that,
\[ \chi(\cN^{(n)}_{W^{n}}) = \chi (\cN_{W_{1}} \otimes\cN_{W_{2}}\otimes \ldots \otimes \cN_{W_{n}} ) .\]
Furthermore, the fact that the queue-channel is additive implies for any $n$, 
\[ \chi(\cN^{(n)}_{W^{n}}) = \sum_{i=1}^{n}\chi \left( \, \cN_{W_{i}} \, \right) .\]
Rewriting this in terms of the Holevo information of the encoding ensemble for the $n$-qubit channel, we get,
{\small
\begin{align}
 \sup_{ \{P^{n}(X^{n}), \rho_{X^{n}}\} } \chi( \, \{ P^{n}_{X^{n}}, \cN^{(n)}_{W^{(n)}}(\rho_{X^{n}}) \} \, )&  \nonumber \\= \sum_{i=1}^{n} \sup_{ \{ P_{i}(X_{i}), \rho_{i}\}} &\chi( \, \{ P(X_{i}), \cN_{W_{i}}(\rho_{i}) \} \, ) . \label{eq:additivity}
\end{align}
}
Combining the above sequence of steps, we thus get the following upper bound on the capacity of an additive quantum queue-channel:
%{\small
\begin{eqnarray}
C &\stackrel{(a)}{=}& \lambda \sup_{\{\vec{P}, \vec{\rho}\}}\underline{\mathbf{I}} (\, \{\vec{P},\vec{\rho}\},\vec{\cN}_{\vec{W}} \, ) \nonumber \\
&\stackrel{(b)}{\leq} & \lambda \sup_{ \{\vec{P}, \vec{\rho}\} } \liminf_{n\rightarrow \infty} \frac{1}{n} \chi( \, \{P^{(n)}, \rho_{X^{n}}\} ,\cN^{(n)}_{W^{(n)}} \, ) \nonumber \\
&\stackrel{(c)}{\leq}& \lambda \liminf_{n\rightarrow \infty} \frac{1}{n} \sup_{\{P^{(n)}, \rho_{X^{n}}\}} \chi( \, \{P^{(n)}, \rho_{X^{n}}\} ,\cN^{(n)}_{W^{(n)}} \, ) \nonumber \\
&\stackrel{(d)}{=}& \lambda \liminf_{n\rightarrow \infty} \frac{1}{n} \sum_{i=1}^{n} \sup_{\{P(X_{i}), \rho_{i}\}}\chi( \, \{ P(X_{i}), \cN_{W_{i}}(\rho_{i}) \} \nonumber \\
&=& \lambda \liminf_{n\rightarrow \infty} \frac{1}{n} \sum_{i=1}^{n} \chi \left( \, \cN_{W_{i}} \, \right) \nonumber \\
&\stackrel{(e)}{=}& \lambda~\EX_{\pi}\left[ \chi(\cN_{W})\right]~{\rm a.s}. \nonumber
\end{eqnarray}
Here, $(a)$ is simply the definition of the queue-channel capacity as stated in Eq.~\eqref{eq:cap_exp2}, and $(b)$ is the upper bound from~\cite[Lemma 5]{HN03_genCapacity}. This upper bound uses the quantum Neyman-Pearson Lemma and the monotonicity of the quantum relative entropy, in addition to invoking arguments from quantum hypothesis testing. The inequality $(c)$ follows from elementary real analysis. The equality in $(d)$ follows from the conditional independence of the $n$-use channel and the additivity of the quantum queue-channel (see Eq.~\eqref{eq:additivity}), and $(e)$ follows from the ergodicity of the queue.
\end{proof}

We note that our proof of the upper bound on classical queue-channel capacity assumes knowledge of the sequence of waiting times $(W_{1}, W_{2}, \ldots, W_{n})$ at the receiver. This automatically implies an upper bound for the scenario where the receiver does not have knowledge of the waiting times.

\subsection{Erasure Queue-Channels}
\label{sec:ECquantum}

Erasure channels are ubiquitous in classical as well as quantum information theory.  Our model of an erasure queue-channel captures a quantum information system where qubits decohere with time into erased (or non-informative) quantum states.

A single-use quantum erasure channel, for a qubit with waiting time $W_{1}=w_1$ is characterized by a pair of noise operators (or \emph{Kraus} operators), namely the operator $E: \rho_{1} \rightarrow |e\rangle\langle e|$ which maps any input density operator $\rho_{1}$ to a fixed erasure state $|e\rangle\langle e|$ , and the identity operator $I : \rho_{1}\rightarrow \rho_{1}$. If the input state has a waiting time $W_{1}=w_1$, the erasure and identity operations occur with probabilities $q_{E}(w_1) = p(w_1)$ and $q_{I}(w_1) = 1 - p(w_1)$ respectively. Thus, the final state after the action of the erasure queue-channel for a given input state $\rho_{1}$, with waiting time $W_{1}=w_1$ is
\[ \cN_{W_{1}}(\rho_{1}) = q_{E}(w_1) \, E \rho_{1} E^{\dagger} + q_{I}(w_1) \, \rho_{1}  . \]

We now evaluate the upper bound proved in Theorem~\ref{thm:queueCapacity_ub} for the case of the quantum erasure queue-channel and show that is achievable using a classical coding strategy. This leads us to the following result on the classical capacity of the quantum erasure queue-channel.

\begin{theorem}
\label{thm:queueCapacityErasure}
For the erasure queue-channel defined above, the capacity is given by $C = \lambda~\EX_{\pi}\left[1-p(W)\right]$ bits/sec, irrespective of the receiver's knowledge of the arrival and the departure times, where $\pi$ is the stationary distribution of the waiting time in the queue.
\end{theorem}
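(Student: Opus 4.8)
The plan is to sandwich $C$ between a matching upper and lower bound. For the upper bound I would specialize Theorem~\ref{thm:queueCapacity_ub} to the erasure channel; for the lower bound I would fall back to the \emph{induced classical} (erasure) queue-channel and invoke the classical capacity result of~\cite{JagannathanCM2019}, which is legitimate because any rate achievable under the restriction to orthogonal product-state encodings and a fixed product-basis decoding is \emph{a fortiori} achievable without those restrictions.

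For the upper bound, the first step is to note that the single-use quantum erasure channel has additive Holevo information --- this is Holevo's result~\cite{holevo2004} --- so the quantum erasure queue-channel is an additive quantum queue-channel in the sense of the definition above, and Theorem~\ref{thm:queueCapacity_ub} applies. The second step is to evaluate $\chi(\cN_{W})$ for the single-use erasure channel with erasure probability $p(W)$ on a qubit input ($\vert\mathcal X\vert = 2$). A short self-contained computation does this: for a pure input state the output $(1-p(W))\rho + p(W)\vert e\rangle\langle e\vert$ has eigenvalues $\{1-p(W), p(W)\}$, so $H(\cN_{W}(\rho)) = H_{2}(p(W))$ for every pure $\rho$; for the uniform ensemble over $\{\vert 0\rangle,\vert 1\rangle\}$ the average output has eigenvalues $\{\tfrac{1-p(W)}{2},\tfrac{1-p(W)}{2},p(W)\}$ and entropy $H_{2}(p(W)) + (1-p(W))$, so that $\chi(\cN_{W}) = 1-p(W)$ bits, consistent with the known formula $(1-p)\log d$ for the $d$-dimensional quantum erasure channel~\cite{bennett97}. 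Plugging $\chi(\cN_{W}) = 1-p(W)$ into Theorem~\ref{thm:queueCapacity_ub} gives $C \le \lambda\,\EX_{\pi}[1-p(W)]$, and since that theorem's bound holds irrespective of whether the receiver knows the arrival and departure epochs, so does this one.

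For the lower bound I would restrict the encoder to send one of the two orthogonal states $\{\vert 0\rangle,\vert 1\rangle\}$ per channel use and the decoder to measure each received system in the $\{\vert 0\rangle,\vert 1\rangle,\vert e\rangle\}$ basis. The resulting input--output relation is exactly a classical erasure queue-channel in which the $j$th symbol is erased --- and the erasure is \emph{observed} at the receiver --- with probability $p(W_{j})$. By~\cite[Theorem 1]{JagannathanCM2019} the capacity of this induced classical erasure queue-channel is $\lambda\,\EX_{\pi}[1-p(W)]$ bits/sec, which is hence an achievable rate for the quantum queue-channel. Crucially, in a classical erasure channel the erased positions are identifiable directly from the channel output, so the decoder needs no knowledge of the waiting-time sequence; the achievable rate is therefore the same with or without that side information. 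Combining the two bounds yields $C = \lambda\,\EX_{\pi}[1-p(W)]$, irrespective of the receiver's knowledge of the arrival and departure times.

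Given Theorem~\ref{thm:queueCapacity_ub}, the main obstacle is not the upper bound --- essentially a one-line specialization once $\chi(\cN_{W})$ is computed --- but establishing the side-information-free achievability cleanly: one must argue that a classical random-coding argument over the \emph{non-stationary}, memory-laden erasure queue-channel still attains $\lambda\,\EX_{\pi}[1-p(W)]$ when the decoder sees only the erasure pattern (not the waiting times). This is where the ergodicity of the queue and the erasure-detection property do the work, and it is the content I would develop (or quote) in Section~\ref{sec:induced_class}; here I would simply cite it.
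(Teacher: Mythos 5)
Your proposal is correct and follows essentially the same route as the paper: the upper bound comes from specializing Theorem~\ref{thm:queueCapacity_ub} via Holevo's additivity result and the formula $\chi(\cN_W)=1-p(W)$, and the lower bound comes from the induced classical erasure queue-channel, whose capacity the paper establishes in Theorem~\ref{thm:queueCapacityErasureInd} (and whose independence from waiting-time side information the paper justifies by a MAP-decoding dominance argument equivalent to your erasure-detectability observation). The only difference is that you cite \cite{JagannathanCM2019} for the induced-channel achievability where the paper re-derives it by information-spectrum methods in Section~\ref{sec:induced_class}, which the paper itself acknowledges is a legitimate shortcut.
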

\emph{Proof Outline:} The upper bound simply follows from the upper bound proved in Theorem~\ref{thm:queueCapacity_ub} above, since the Holevo capacity of the quantum erasure channel is additive~\cite{holevo2004}. The \emph{achievability} proof follows by fixing a classical encoding and decoding strategy and showing that the capacity of the induced classical channel does indeed coincide with the upper bound. 

\begin{proposition}[Upper bound on Capacity]\label{prop:upper_bound}
The capacity of the quantum erasure queue-channel satisfies $C \leq \lambda~\EX_{\pi}\left[1-p(W)\right]$  bits/sec. 
%is bounded as
%\begin{equation}
% .
%\end{equation}
\end{proposition}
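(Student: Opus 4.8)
The plan is to combine the general additive upper bound of Theorem~\ref{thm:queueCapacity_ub} with the additivity of the Holevo information of the quantum erasure channel~\cite{holevo2004}, and then to evaluate $\chi(\cN_W)$ explicitly for the single-use erasure channel. Since the erasure channel is additive, Theorem~\ref{thm:queueCapacity_ub} directly yields $C \leq \lambda\,\EX_{\pi}[\chi(\cN_W)]$, so the only remaining task is to show that $\chi(\cN_W) = 1 - p(W)$ for every fixed value of the waiting time $W=w$. This is just the statement that the classical capacity of the qubit erasure channel equals $1-p$, a fact already alluded to in the introduction.

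To compute $\chi(\cN_W)$, I would use the defining feature of the erasure channel: the erasure flag $|e\rangle\langle e|$ lives on a one-dimensional subspace orthogonal to the two-dimensional input subspace, so the output Hilbert space decomposes as a direct sum and the receiver can tell with certainty whether an erasure occurred. Concretely, for any qubit state $\rho$ the output $\cN_W(\rho) = (1-p(w))\,\rho + p(w)\,|e\rangle\langle e|$ is block-diagonal, with a $2\times 2$ block equal to $(1-p(w))\rho$ and a scalar block equal to $p(w)$; its eigenvalues are therefore $\{(1-p(w))\,\mu_1,\, (1-p(w))\,\mu_2,\, p(w)\}$, where $\mu_1,\mu_2$ are the eigenvalues of $\rho$. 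A direct computation then gives $H(\cN_W(\rho)) = H_{\mathrm{b}}(p(w)) + (1-p(w))\,H(\rho)$, where $H_{\mathrm{b}}(\cdot)$ is the binary entropy function. Applying this identity both to each signal state $\rho_x$ and to the average state $\bar{\rho} = \sum_x P_x \rho_x$ of an arbitrary input ensemble $\{P_x,\rho_x\}$, the binary-entropy terms cancel in the entropy difference defining the Holevo quantity, leaving $\chi(\{P_x,\cN_W(\rho_x)\}) = (1-p(w))\big(H(\bar{\rho}) - \sum_x P_x H(\rho_x)\big)$.

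Taking the supremum over all input ensembles, the bracketed quantity is precisely the Holevo information of the noiseless qubit identity channel, which equals $\log 2 = 1$ bit --- it is bounded above by $H(\bar{\rho}) \le \log 2$ and attained by the ensemble $\{(\tfrac12,|0\rangle\langle 0|),(\tfrac12,|1\rangle\langle 1|)\}$. Hence $\chi(\cN_W) = 1 - p(w)$, and substituting back into the bound from Theorem~\ref{thm:queueCapacity_ub} gives $C \leq \lambda\,\EX_{\pi}[1-p(W)]$ bits/sec, where $\EX_{\pi}$ is taken with respect to the stationary waiting-time distribution $\pi$ exactly as in Theorem~\ref{thm:queueCapacity_ub}.

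I do not expect a genuine obstacle here. The only point requiring care is the orthogonality of the erasure flag $|e\rangle$ to the input subspace: this is what makes the decomposition $H(\cN_W(\rho)) = H_{\mathrm{b}}(p(w)) + (1-p(w))\,H(\rho)$ exact, and in turn lets the $H_{\mathrm{b}}(p(w))$ contributions cancel cleanly when forming the Holevo quantity. No new limiting or ergodicity argument beyond the one already used in Theorem~\ref{thm:queueCapacity_ub} is needed, since the passage to the stationary expectation has already been carried out there.
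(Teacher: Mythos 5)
Your proposal is correct and follows essentially the same route as the paper: invoke the general additive upper bound of Theorem~\ref{thm:queueCapacity_ub} together with Holevo's additivity result for the erasure channel, and then use the fact that $\chi(\cN_{W})=1-p(W)$ for the single-use qubit erasure channel. The only difference is that the paper simply cites \cite{bennett97} for this last fact, whereas you derive it explicitly via the block-diagonal structure of the output state; your computation is correct and makes the argument self-contained.
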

\begin{proof}
In order to obtain an \emph{upper bound on capacity}, we invoke the celebrated additivity result of Holevo~\cite{holevo2004} for the quantum erasure channel. Next, we use the fact that the Holevo information of a single use erasure channel corresponding to waiting time $W_{i}$  is given by~\cite{bennett97},
\begin{equation}
 \chi (\cN_{W_{i}}) \equiv  \sup_{\{P(x), \rho_{x}\}}\chi( \, \{P_{x}, \cN_{W_{i}}(\rho_{x}) \} \, )  = 1 - p(W_{i}) . \label{eq:erasure_cap}
 \end{equation}
This, along with the upper bound proved in Theorem~\ref{thm:queueCapacity_ub} above, results in the desired upper bound. 
\end{proof}

%It is easy to see that our proof technique for the converse theorem generalizes readily --- in particular,  whenever the underlying quantum noise channel is \emph{additive}, we can obtain an upper bound on the classical capacity of the corresponding quantum queue-channel.

\begin{proposition}[Lower bound on Capacity (Achievability)]

\label{prop:lower_bound}
The capacity of the quantum erasure queue-channel satisfies
$C \geq \lambda~\EX_{\pi}\left[1-p(W)\right]$ bits/sec.
%\end{equation}
\end{proposition}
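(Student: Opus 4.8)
\emph{Proof Outline:}
The plan is to exhibit a \emph{classical} coding/decoding strategy whose achievable rate already equals $\lambda\,\EX_\pi[1-p(W)]$; since any rate achievable under such a restriction is a fortiori achievable for the quantum erasure queue-channel, this yields the claimed lower bound. First I would fix the encoder to use only the two orthogonal qubit states $|0\rangle\langle 0|$ and $|1\rangle\langle 1|$ (one bit per channel use), and fix the decoder to perform the projective measurement in the basis $\{|0\rangle,|1\rangle,|e\rangle\}$ on each received qubit. Because the erasure flag $|e\rangle$ is orthogonal to the input subspace, erasures are \emph{detected} by the receiver. Consequently this strategy reduces the $n$-use channel $\cN^{(n)}_{W^{(n)}}$ to the \emph{induced classical channel}: conditioned on the waiting-time sequence $W^{(n)}=(W_1,\ldots,W_n)$, it is a product of classical erasure channels in which the $i$th input symbol is delivered intact with probability $1-p(W_i)$ and replaced by an erasure symbol with probability $p(W_i)$, independently across $i$ --- this conditional product structure is exactly Lemma~\ref{lem:cond_ind} specialised to the erasure Kraus operators. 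This induced channel is precisely the classical erasure queue-channel analysed in~\cite{JagannathanCM2019} and in Section~\ref{sec:induced_class}.

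Next I would lower-bound the queue-channel capacity by the rate achievable with a uniform i.i.d.\ random codebook over $\{0,1\}$. Using the general capacity formula of Proposition~\ref{prop:expression}, it suffices to evaluate the (classical) spectral inf-information rate for this input distribution, and to show it equals $\EX_\pi[1-p(W)]$ almost surely. For the erasure channel the sample information density of the $n$th block is (in bits) $\sum_{i=1}^{n} Z_i$, where $Z_i$ is the indicator that symbol $i$ is \emph{not} erased; hence $\tfrac1n$ times the information density is the fraction of non-erased symbols among the first $n$. I would then argue this fraction converges a.s.\ to $\EX_\pi[1-p(W)]$ in two steps: (i) $\tfrac1n\sum_{i=1}^{n}\big(Z_i-(1-p(W_i))\big)\to 0$ a.s., since conditioned on the whole waiting-time sequence the summands form a bounded martingale-difference sequence (the erasure coin flips are conditionally independent given $W^\infty$); and (ii) $\tfrac1n\sum_{i=1}^{n}(1-p(W_i))\to\EX_\pi[1-p(W)]$ a.s.\ by the ergodicity of the queue --- this is exactly the ergodic convergence already invoked in step $(e)$ of the proof of Theorem~\ref{thm:queueCapacity_ub}, and it absorbs the transient coming from the empty-queue start. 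Since the spectral inf-information rate is then a.s.\ the constant $\EX_\pi[1-p(W)]$, and since there are on average $\lambda$ channel uses per second, the rate $\lambda\,\EX_\pi[1-p(W)]$ bits/sec is achievable. Equivalently, one can bypass the spectral-rate machinery and run a direct Feinstein/random-coding argument on the induced classical channel, decoding only the non-erased coordinates.

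The main obstacle I anticipate is the information-stability bookkeeping: one must verify that the \emph{liminf-in-probability} defining the inf-information rate actually attains the a.s.\ limit $\EX_\pi[1-p(W)]$ (and not something strictly smaller), which is precisely where the conditional strong law for the erasure indicators must be combined with the queue's ergodic theorem. One also needs to confirm cleanly that erasures are detectable from the measurement outcome, and that restricting to orthogonal product inputs with a fixed product measurement loses no rate here --- the latter is guaranteed a posteriori because the resulting lower bound matches the upper bound of Proposition~\ref{prop:upper_bound}, thereby completing the proof of Theorem~\ref{thm:queueCapacityErasure}. Finally, note that throughout, the receiver never uses the waiting times themselves: it only observes which symbols were erased, which is automatic, so the achievability (hence the capacity) holds irrespective of the receiver's knowledge of the arrival and departure epochs.
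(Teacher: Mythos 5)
Your proposal is correct and follows essentially the same route as the paper: fix a classical encoder using two orthogonal product states and a fixed product-basis measurement, observe that this realises the induced classical erasure queue-channel (whose conditional product structure is Lemma~\ref{lem:cond_ind}), and then establish achievability of $\EX_\pi[1-p(W)]$ per channel use via the information-spectrum formula with uniform i.i.d.\ inputs, using conditional independence of the erasure events given the waiting times together with ergodicity of the queue. The only cosmetic difference is that you evaluate the information density directly (it reduces to the fraction of non-erased symbols), whereas the paper routes the same computation through the sup/inf conditional entropy-rate bounds of Lemmas~\ref{lem:HUbarYgivenXandWErasure}--\ref{lem:HUbarYgivenWErasureAchieve} in its proof of Theorem~\ref{thm:queueCapacityErasureInd}.
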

\begin{proof}
We prove the lower bound by producing a particular encoding/decoding strategy that achieves the said capacity expression. In particular, we employ a classical strategy in which classical bits 0 and 1 are encoded into two fixed orthogonal states (say $|\psi_0\rangle$ and $|\psi_1\rangle$), and the decoder also measures in a fixed basis. The input codewords are unentangled across multiple channel uses and the decoder simply performs  product measurements. In this setting, the qubits essentially behave as classical bits, and the quantum erasure channel  simulates the \emph{induced classical} channel. The capacity of the corresponding induced classical channel is equal to $\lambda~\EX_{\pi}\left[1-p(W)\right],$ as shown in Theorem~\ref{thm:queueCapacityErasureInd} in Sec.~\ref{sec:EC} below.
\end{proof}

We remark that the above capacity result does not depend on the specific functional form of $p(\cdot).$ Further, the capacity result holds for any stationary and ergodic queue --- i.e., it does not assume any specific queueing model. If we assume the functional form $p(W)=1-\exp(-\kappa W)$ for the erasure probability, the following  corollary is immediate.
\begin{corollary}
\label{thm:queueCapacityErasurekappa}
When the decoherence time of each qubit is exponentially distributed, i.e., $p(W)=1-\exp(-\kappa W),$ the erasure queue-channel capacity is given by $\lambda \EX_{\pi}\left[e^{-\kappa W}\right]$ bits/sec., where $\pi$ is the stationary distribution of the waiting time in the queue.
\end{corollary}   

\subsection{$\Mq/\Mq/1$ and $\Mq/\GI/1$ examples}
We remark that the capacity expression $\lambda\EX_{\pi}\left[e^{-\kappa W}\right]$ is simply $\lambda$ times the Laplace transform of the stationary waiting time $W,$ evaluated at $\kappa,$ which is the rate of decoherence. We now derive closed form expressions for the capacity for the $\Mq/\Mq/1$ and $\Mq/\GI/1$ queues. An $\Mq/\GI/1$ queue is a well-studied model of a single server queue in which the arrival process is a Poisson process, the service times are independent of the arrival process, and generally distributed (but i.i.d across qubits). See \cite[Chapter 5]{Kleinrock1975_I} for a detailed treatment of the $\Mq/\GI/1$ queue. The $\Mq/\Mq/1$ queue is is a special case of an $\Mq/\GI/1$ queue, where the service times are exponentially distributed.

Using Pollaczek-Khinchin formula \textcolor{black}{(see \cite[Eq. (5.105)]{Kleinrock1975_I})} for an FCFS $\Mq/\GI/1$ queue, we can obtain a closed-form expression for the capacity and the optimal arrival rate. 
\begin{theorem}
\label{thm:bestArrMG1}
For an FCFS $\Mq/\GI/1$ erasure queue-channel (with $p(W)=1-\exp(-\kappa W)$), 
\begin{itemize}
\item[(i)] the capacity is given by $\frac{\lambda(1-\lambda)}{1-\alpha\lambda}$ bits/sec, and
\item[(ii)] the capacity is maximised at $$\lambda_{\Mq/\GI/1} = \frac{1}{\alpha} \left(1-\sqrt{1-\alpha}\right)=\frac{1}{1+\sqrt{1-\alpha}},$$\end{itemize}
where $\alpha=\frac{1-\tilde{F}_S(\kappa)}{\kappa}$, and $\tilde{F}_S(u)=\int \exp(-u x) dF_S(x)$ is the Laplace transform of the service time distribution.
\end{theorem}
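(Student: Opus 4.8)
The plan is to reduce the statement to a one-variable calculus exercise by feeding the closed-form stationary waiting-time transform into the single-letter capacity formula already established in Corollary~\ref{thm:queueCapacityErasurekappa}. By that corollary, the capacity as a function of the arrival rate is $C(\lambda)=\lambda\,\EX_\pi[e^{-\kappa W}]$, so $C(\lambda)/\lambda$ is exactly the Laplace--Stieltjes transform of the stationary waiting time evaluated at $s=\kappa$. For an FCFS $\Mq/\GI/1$ queue with unit-mean service (so $\rho=\lambda$, since $\mu=1$), the Pollaczek--Khinchin transform formula (Eq.~(5.105) of \cite{Kleinrock1975_I}) gives this transform in closed form as $\tilde W(s)=\dfrac{(1-\lambda)s}{\,s-\lambda\bigl(1-\tilde F_S(s)\bigr)\,}$. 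The first step is simply to substitute $s=\kappa$ and rewrite the denominator as $\kappa\bigl(1-\lambda\cdot\tfrac{1-\tilde F_S(\kappa)}{\kappa}\bigr)=\kappa(1-\alpha\lambda)$ using the definition $\alpha=\tfrac{1-\tilde F_S(\kappa)}{\kappa}$; the factor $\kappa$ cancels, leaving $\EX_\pi[e^{-\kappa W}]=\frac{1-\lambda}{1-\alpha\lambda}$ and hence $C(\lambda)=\frac{\lambda(1-\lambda)}{1-\alpha\lambda}$, which is claim (i).

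For claim (ii), the task is to maximize $C(\lambda)=\frac{\lambda(1-\lambda)}{1-\alpha\lambda}$ over the stability region $\lambda\in(0,1)$. Since $C(0)=C(1)=0$ while $C(\lambda)>0$ on $(0,1)$ and $C$ is smooth there, the maximizer is the unique interior stationary point. Differentiating and clearing the positive factor $(1-\alpha\lambda)^2$, the condition $C'(\lambda)=0$ collapses (after the $\alpha\lambda$ cross-terms cancel) to the quadratic $\alpha\lambda^2-2\lambda+1=0$, with roots $\lambda=\frac{1\pm\sqrt{1-\alpha}}{\alpha}$. One then rationalizes: $\frac{1-\sqrt{1-\alpha}}{\alpha}=\frac{1-(1-\alpha)}{\alpha(1+\sqrt{1-\alpha})}=\frac{1}{1+\sqrt{1-\alpha}}$, which is exactly the claimed optimal rate, and the ``$+$'' root is discarded because $\frac{1+\sqrt{1-\alpha}}{\alpha}\ge\frac1\alpha\ge 1$ lies outside $(0,1)$. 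A small preliminary observation needed here is that $\alpha\in(0,1]$, so that $\sqrt{1-\alpha}$ is real and $\lambda_{\Mq/\GI/1}\in(\tfrac12,1)$ is feasible: this follows from $e^{-x}\ge 1-x$, which yields $\tilde F_S(\kappa)=\EX[e^{-\kappa S}]\ge 1-\kappa\EX[S]=1-\kappa$, i.e.\ $1-\tilde F_S(\kappa)\le\kappa$.

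I do not expect a genuine obstacle here — once the P--K transform is in hand, the rest is a substitution and a single-variable optimization. The only places demanding a little care are (a) matching conventions, namely the normalization $\mu=1$ (so $\rho=\lambda$) and using the waiting-time transform that is consistent with the random variable $W$ appearing in Corollary~\ref{thm:queueCapacityErasurekappa}; and (b) in part (ii), justifying that the correct root of the quadratic is the one inside the stability region and that the resulting critical point is the global maximum over $(0,1)$ rather than a minimum. The latter is handled cleanly by the endpoint argument ($C$ vanishes at $\lambda=0$ and $\lambda=1$, is strictly positive in between, and has a single interior critical point), so no second-derivative computation is strictly required.
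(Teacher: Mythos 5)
Your proposal is correct and follows essentially the same route as the paper: substitute the Pollaczek--Khinchin transform of the stationary waiting time into the capacity formula $\lambda\,\EX_\pi[e^{-\kappa W}]$ to get part (i), then solve the quadratic first-order condition for part (ii), discarding the root outside $[0,1)$. The only (harmless) difference is that the paper justifies the interior critical point as the maximizer by asserting concavity of $\lambda\mapsto\frac{\lambda(1-\lambda)}{1-\alpha\lambda}$, whereas you use the endpoint argument $C(0)=C(1)=0$ together with uniqueness of the critical point, and you additionally verify $\alpha\in(0,1]$, which the paper leaves implicit.
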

\begin{IEEEproof}[Proof]%We first use the fact  use the Pollaczek-Khinchin formula for $\Mq/\GI/1$ queue and the fact that
%the capacity is $\lambda$ times the Laplace tranform of waiting time distribution to show that the optimization of capacity over $\lambda$ is a convex problem whose optimum can be evaluated uniquely in a closed form. 
First, note that for the particular form of $p(\cdot)$ considered here, the capacity expression in Theorem~\ref{thm:queueCapacityErasure} simplifies to
$\lambda \EX[\exp(-\kappa W)].$

Next, using the  Pollaczek-Khinchin formula for the $\Mq/\GI/1$ queue (with $\mu=1$), we write
\begin{align}
& 
\ \EX[\exp(-\kappa W)]  \nonumber  = \frac{(1-\lambda)\kappa}{\kappa - \lambda (1-\tilde{F}_S(\kappa))} \nonumber  = \frac{1-\lambda}{1 - \alpha \lambda},
\end{align}
where $\alpha = \frac{(1-\tilde{F}_S(\kappa))}{\kappa}$. Thus, the capacity is $\frac{\lambda (1-\lambda)}{1 - \alpha \lambda}$ bits/sec, and the capacity maximising arrival rate
is $$\arg\max_{\lambda \in [0,1)} \frac{\lambda (1-\lambda)}{1 - \alpha \lambda}\mbox{.}$$
%\label{eq:MG1opti}
%\end{align}

The objective function in the above optimization problem is
 concave in $\lambda$. This implies that the value of $\lambda$ that maximises the capacity  is the one at which  the derivative of the capacity with respect to $\lambda$ is zero. Taking the derivative, we obtain a quadratic function in $\lambda$ which when equated to zero yields two solutions for $\lambda$:
$\frac{1}{\alpha} \pm \frac{\sqrt{1-\alpha}}{\alpha}.$
The only valid solution for which $\lambda\in[0,1)$ is given by $\frac{1}{\alpha} - \frac{\sqrt{1-\alpha}}{\alpha}$.
\end{IEEEproof}

\begin{figure}
\centering
  \includegraphics[scale=0.7]{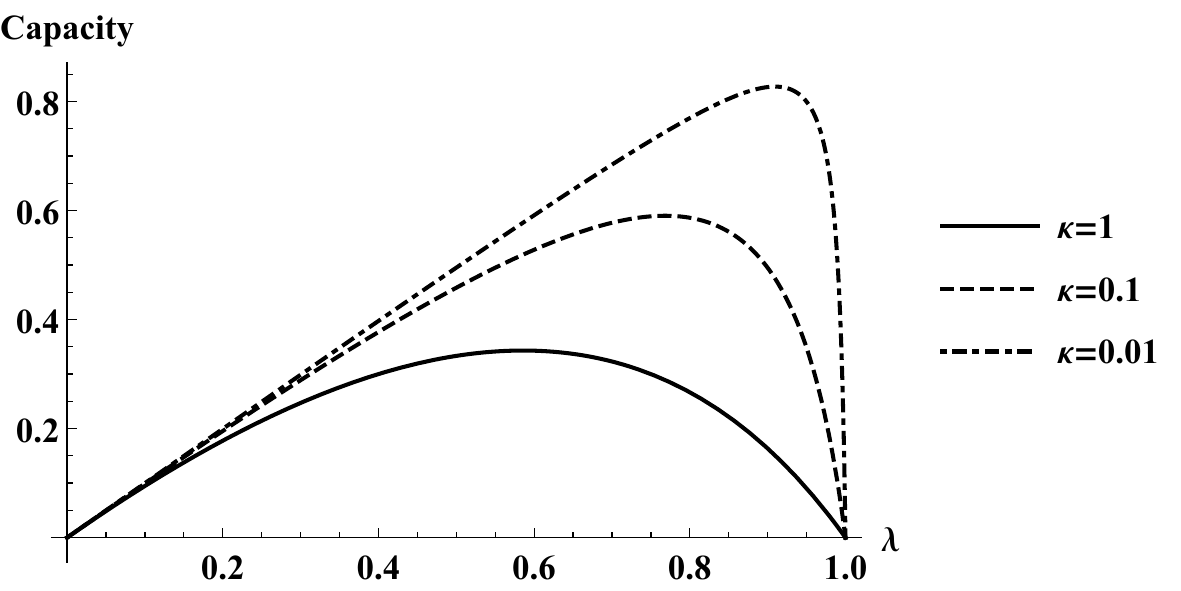}
  \caption{The capacity of the $\Mq/\Mq/1$ erasure queue-channel (in bits/sec) plotted as a function of the arrival rate $\lambda$ for different values of the decoherence parameter $\kappa.$}
  \label{fig:plot}
\end{figure}

This result offers interesting insights into the relation between the information capacity and the characteristic time-constant of the quantum states. Fig.~\ref{fig:plot} plots capacity versus arrival rate for an $\Mq/\Mq/1$ queue of unit service rate. We note that $\kappa=0.01$ corresponds to an average coherence time which is two orders of magnitude longer than the service time --- a setting  reminiscent of superconducting qubits~\cite{wendin2017quantum}. We also notice from the shape of the capacity curve for $\kappa=0.01$ that there is a drastic drop in the capacity, if the system is operated beyond the optimal arrival rate $\lambda_{\Mq/\Mq/1}.$ This is due to the drastic increase in delay induced decoherence as the arrival rate of qubits approaches the server capacity.

Theorem~\ref{thm:bestArrMG1} characterizes an optimal $\lambda$ for given arrival and service distributions. One can also
ask after the best service distribution for a given arrival process and a fixed server rate. This question is of interest
in designing the server characteristics like gate operations \cite{gatefidelity16}, or photon detectors for quantum communications. The following theorem is useful in such scenarios.

\begin{theorem}
\label{thm:bestServMG1}
For an erasure queue-channel with $p(W)=1-\exp(-\kappa W)$ and Poisson arrivals at a given rate $\lambda,$  the capacity is maximised  by the service time distribution $F_S(x)=\mathbf{1}(x\ge 1)$, i.e., a deterministic service time maximises capacity, among all
service distributions with unit mean and $F_S(0)=0$.
\end{theorem}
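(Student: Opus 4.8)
The plan is to reduce the statement to a one-line application of Jensen's inequality, using the closed-form capacity already derived in Theorem~\ref{thm:bestArrMG1}. Since the arrivals are Poisson at a \emph{fixed} rate $\lambda$ and only the service distribution is allowed to vary, Theorem~\ref{thm:bestArrMG1}(i) gives the capacity as $C(F_S) = \frac{\lambda(1-\lambda)}{1-\alpha\lambda}$, where $\alpha = \alpha(F_S) = \frac{1-\tilde F_S(\kappa)}{\kappa}$ and $\tilde F_S(\kappa) = \EX[e^{-\kappa S}]$. The Pollaczek-Khinchin formula underlying Theorem~\ref{thm:bestArrMG1} imposes no constraint on $F_S$ beyond having finite (here unit) mean, so this expression is valid for every competitor service distribution, in particular for the deterministic choice $F_S(x) = \mathbf{1}(x\ge 1)$, which produces a perfectly legitimate stable $\Mq/\Dq/1$ queue.

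First I would observe that, for fixed $\lambda\in(0,1)$, the map $\alpha\mapsto \frac{\lambda(1-\lambda)}{1-\alpha\lambda}$ is strictly increasing on its domain, since its derivative equals $\frac{\lambda^2(1-\lambda)}{(1-\alpha\lambda)^2}>0$. Next, by Jensen's inequality applied to the convex function $x\mapsto e^{-\kappa x}$, any unit-mean service distribution satisfies $\tilde F_S(\kappa)=\EX[e^{-\kappa S}]\ge e^{-\kappa\EX[S]}=e^{-\kappa}$, and hence $\alpha(F_S)\le \frac{1-e^{-\kappa}}{\kappa}<1$. Consequently $\alpha$ ranges over a subinterval of $[0,1)$, so $1-\alpha\lambda\ge 1-\lambda>0$ and the capacity formula stays finite; maximising $C(F_S)$ over service distributions is therefore equivalent to \emph{minimising} $\tilde F_S(\kappa)$.

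The minimisation is then closed out by the equality case of Jensen: because $x\mapsto e^{-\kappa x}$ is \emph{strictly} convex (for $\kappa>0$), equality $\EX[e^{-\kappa S}]=e^{-\kappa}$ forces $S$ to equal its mean almost surely, i.e.\ $S\equiv 1$, i.e.\ $F_S(x)=\mathbf{1}(x\ge 1)$. Thus deterministic service uniquely minimises $\tilde F_S(\kappa)$, hence uniquely maximises $\alpha$, and therefore the capacity, which is exactly the assertion.

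I do not anticipate a real obstacle here; the content is essentially Jensen's inequality plus Theorem~\ref{thm:bestArrMG1}. The only matters needing a little care are (a) confirming that the closed-form capacity expression genuinely applies to the degenerate deterministic distribution and not only to, say, absolutely continuous ones, and (b) keeping track of the strict-convexity/equality case if one wants the \emph{uniqueness} of the optimiser. It is also worth remarking that the hypothesis $F_S(0)=0$ plays no role in the argument --- the conclusion holds for every unit-mean service distribution --- but it is retained to stay consistent with the paper's standing assumptions on the service process.
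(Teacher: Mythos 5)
Your proposal is correct and follows essentially the same route as the paper: both reduce the problem to minimising $\tilde F_S(\kappa)$ over unit-mean service distributions (the paper by algebraically rewriting the capacity so that $\tilde F_S(\kappa)$ appears alone in the denominator, you by the equivalent monotonicity-in-$\alpha$ observation) and then conclude via Jensen's inequality applied to $x\mapsto e^{-\kappa x}$. Your added remarks on strict convexity (uniqueness of the optimiser) and on the applicability of the P--K formula to the degenerate distribution are fine refinements but do not change the argument.
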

\begin{IEEEproof}[Proof]
As derived in the proof of Theorem \ref{thm:bestArrMG1}, the capacity is
\begin{align}
\frac{\lambda (1-\lambda)\kappa}{\kappa - \lambda (1-\tilde{F}_S(\kappa))} \nonumber = \frac{\frac{(1-\lambda)\kappa}{\lambda}}{\frac{\kappa-\lambda}{\lambda} + \tilde{F}_S(\kappa)}\mbox{.}
\nonumber
\end{align}
Thus, for any given $\lambda$, among all service distribution with unit mean, the capacity is maximised 
by that service distribution for which $\tilde{F}_S(\kappa)$ is minimised.
For any service random variable $S$, by Jensen's inequality, we have
$\tilde{F}_S(\kappa)=\EX[\exp(-\kappa S)]\ge \exp(-\kappa\EX[S]).$
Therefore, $\tilde{F}_S(\kappa)$ is minimised by $S=\EX[S]$, i.e., a deterministic service time.
\end{IEEEproof}

Theorems~\ref{thm:bestArrMG1} and \ref{thm:bestServMG1} hold for an  $\Mq/\GI/1$ queue when $p(W)=1-e^{-\kappa W}.$ For the  $\Mq/\Mq/1$ case, we can derive a capacity expression for any functional form of the erasure probability $p(\cdot).$
\begin{theorem}
\label{thm:bestArrMM1GenPsi}
For the $\Mq/\Mq/1$ erasure queue-channel, the capacity expression for any functional form of the erasure probability $p(\cdot)$ is given by $$\lambda \left(1 - \frac{1-\lambda}{\lambda}\tilde{p}\left(\frac{1-\lambda}{\lambda}\right)\right)\ {\rm bits/sec.},$$ where for any $u>0$, $\tilde{p}(u):=\int \exp(-u x) p(x) dx$ is the Laplace transform of $p(\cdot)$. Furthermore, the capacity maximizing arrival rate is 
 \begin{align}
%& \arg\max_{\lambda \in (0,1)} \lambda \left(1 - \frac{1-\lambda}{\lambda}\tilde{p}\left(\frac{1-\lambda}{\lambda}\right)\right) \nonumber \\ 
%& \arg\max_{\lambda \in (0,1)} \left(\lambda  - (1-\lambda)\tilde{p}\left(\frac{1-\lambda}{\lambda}\right)\right) \nonumber \\
%& 1 - \arg\max_{u \in (0,1)} \left(1 - u - u \tilde{p}\left(\frac{u}{1-u}\right)\right)\nonumber \\
& 1 - \arg\min_{u \in (0,1)} u~\left(1 + \tilde{p}\left(\frac{u}{1-u}\right)\right).
\label{eq:bestArrMM1GenPsi} 
 \end{align}
 
\end{theorem}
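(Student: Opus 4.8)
\emph{Proof proposal.} The plan is to specialise the single-letter erasure-queue capacity of Theorem~\ref{thm:queueCapacityErasure} to the $\Mq/\Mq/1$ queue and then carry out the resulting one-parameter optimisation over the arrival rate. Theorem~\ref{thm:queueCapacityErasure} already gives, for any stationary ergodic queue, $C=\lambda\,\EX_{\pi}[1-p(W)]=\lambda\bigl(1-\EX_{\pi}[p(W)]\bigr)$, so the whole task reduces to (i) writing down the stationary waiting-time law $\pi$ of the $\Mq/\Mq/1$ queue, and (ii) evaluating $\EX_{\pi}[p(W)]$ for a generic $p(\cdot)$.

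For step (i), I would recall that in the $\Mq/\Mq/1$ queue with unit service rate and arrival rate $\lambda\in(0,1)$ the relevant stationary waiting time is exponentially distributed with rate $(1-\lambda)/\lambda$ (equivalently, mean $\lambda/(1-\lambda)$, the standard $\Mq/\Mq/1$ mean delay); this follows either from the birth-death structure directly or by specialising the Pollaczek-Khinchin transform to exponential service, exactly as in the proof of Theorem~\ref{thm:bestArrMG1}. Step (ii) is then just an integral: with $\tilde p(u)=\int_0^\infty e^{-ux}p(x)\,dx$,
\[
\EX_{\pi}[p(W)]=\int_0^\infty p(w)\,\frac{1-\lambda}{\lambda}\,e^{-\frac{1-\lambda}{\lambda}w}\,dw=\frac{1-\lambda}{\lambda}\,\tilde p\!\Bigl(\frac{1-\lambda}{\lambda}\Bigr),
\]
which is finite because $0\le p\le 1$. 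Substituting into $C=\lambda(1-\EX_{\pi}[p(W)])$ produces the claimed capacity $\lambda\bigl(1-\tfrac{1-\lambda}{\lambda}\tilde p(\tfrac{1-\lambda}{\lambda})\bigr)$. (One assumes the natural normalisation $p(0)=0$, so that any point mass of the waiting time at the origin contributes no extra $(1-\lambda)p(0)$ term; this holds for the physically relevant $p(w)=1-e^{-\kappa w}$, and reduces to the explicit formula already obtained for $\Mq/\Mq/1$ in Theorem~\ref{thm:bestArrMG1}.)

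For the capacity-maximising arrival rate, the awkward point is that $\lambda$ appears twice --- as the explicit throughput prefactor and inside the Laplace argument --- so the derivative is messy as written. The manoeuvre I would use is the change of variable $u:=1-\lambda\in(0,1)$, under which the capacity becomes $C=(1-u)-u\,\tilde p\!\bigl(\tfrac{u}{1-u}\bigr)=1-u\bigl(1+\tilde p(\tfrac{u}{1-u})\bigr)$. Since $C$ is now $1$ minus a nonnegative function of $u$ alone, maximising $C$ over $\lambda\in(0,1)$ is the same as minimising $u\bigl(1+\tilde p(\tfrac{u}{1-u})\bigr)$ over $u\in(0,1)$, whence $\lambda^{\star}=1-\arg\min_{u\in(0,1)}u\bigl(1+\tilde p(\tfrac{u}{1-u})\bigr)$, which is exactly \eqref{eq:bestArrMM1GenPsi}. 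That the $\arg\min$ is attained in the open interval follows from the boundary behaviour: for $p$ nondecreasing with $p(\infty)=1$ the objective tends to $1$ as $u\to 1^{-}$ and (as in the exponential case) as $u\to 0^{+}$, while $C$ is strictly positive for some $\lambda$, forcing an interior minimiser; for smooth $p$ one can alternatively set the derivative to zero, recovering the explicit quadratic of Theorem~\ref{thm:bestArrMG1} when $p(w)=1-e^{-\kappa w}$.

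The main obstacle here is not conceptual --- there is no hard estimate to prove beyond Theorem~\ref{thm:queueCapacityErasure} --- but arithmetic and organisational: correctly pinning down the rate $(1-\lambda)/\lambda$ of the $\Mq/\Mq/1$ stationary waiting time (an easy place to be off by a factor of $\lambda$, or to mishandle the atom at the origin), and recognising that the two-fold appearance of $\lambda$ collapses, after the substitution $u=1-\lambda$, into the clean one-dimensional $\arg\min$ that the theorem reports.
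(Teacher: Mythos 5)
Your proposal follows essentially the same route as the paper's proof: it specialises Theorem~\ref{thm:queueCapacityErasure} to the $\Mq/\Mq/1$ queue using the exponential waiting-time law with rate $\frac{1-\lambda}{\lambda}$, identifies $\EX_{\pi}[p(W)]=\frac{1-\lambda}{\lambda}\tilde p\left(\frac{1-\lambda}{\lambda}\right)$, and then substitutes $u=1-\lambda$ to convert the maximisation into the stated $\arg\min$, exactly as the paper does. Your additional remarks on the atom at the origin and on interior attainment of the minimiser go beyond what the paper records but do not alter the argument.
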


\begin{IEEEproof}[Proof]
This proof uses the exponential waiting time distribution of $\Mq/\Mq/1$ queue to relate
the capacity to Laplace transform of $p(\cdot)$. It is known that the waiting time in $\Mq/\Mq/1$ is distributed as $\exp\left(\frac{1-\lambda}{\lambda}\right)$ for $\mu=1$. Thus, 
\begin{align}
& \ \EX[p(W)] \nonumber 
= \int_{0}^{\infty} p(w) \frac{1-\lambda}{\lambda} \exp\left(\frac{1-\lambda}{\lambda}w\right)
dw \nonumber \\
& = \frac{1-\lambda}{\lambda}\tilde{p}\left(\frac{1-\lambda}{\lambda}\right). \nonumber
\end{align}

Thus, the capacity is given by
$\lambda \left(1 - \frac{1-\lambda}{\lambda}\tilde{p}\left(\frac{1-\lambda}{\lambda}\right)\right)$ bits/sec.

Therefore, the capacity maximising arrival rate is the one that maximises this expression:
\begin{align}
& \ \ \arg\max_{\lambda \in (0,1)} \lambda \left(1 - \frac{1-\lambda}{\lambda}\tilde{p}\left(\frac{1-\lambda}{\lambda}\right)\right) \nonumber \\ 
& \iff \arg\max_{\lambda \in (0,1)} \left(\lambda  - (1-\lambda)\tilde{p}\left(\frac{1-\lambda}{\lambda}\right)\right) \nonumber \\
& \iff 1 - \arg\max_{u \in (0,1)} \left(1 - u - u \tilde{p}\left(\frac{u}{1-u}\right)\right)\nonumber \\
& \iff 1 - \arg\min_{u \in (0,1)} u~\left(1 + \tilde{p}\left(\frac{u}{1-u}\right)\right).
\nonumber
 \end{align}
\end{IEEEproof}
\subsection{Quantum Depolarizing Queue-channel}\label{sec:depolarizing}

Our techniques can be easily extended to study another important additive quantum channel, namely the depolarizing channel. Suppose that the $j$th qubit spends $W_j=w_j$ seconds in the queue. The quantum depolarizing channel acts on a given qubit $\rho_{j}$ as, $\rho_{j} \rightarrow (1-p(w_{j}))\rho_{j} + p(w_{j})\frac{I}{2},$ where $I/2$ is the maximally mixed state~\cite{NCBook}. Additivity of the Holevo information of the depolarizing channel was shown by King~\cite{king2003}, leading to a single-letter formula for its classical capacity, originally noted in~\cite{bennett97}. Here, we evaluate the classical capacity of the quantum depolarizing queue-channel as stated below.
\begin{theorem}
The classical capacity of a quantum depolarizing queue-channel with arrival rate $\lambda$ is, 
\[ C = \lambda~\EX_\pi\left[1-h\left(\frac{p(W)}{2}\right)\right] \; {\rm bits/sec},\]
provided the receiver knows the arrival and the departure times of the qubits. Here, $h(.)$ denotes the binary entropy function, and the expectation is taken over the stationary distribution of the waiting time $W$ in the queue.
\end{theorem}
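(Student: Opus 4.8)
The plan is to establish matching upper and lower bounds on $C$. For the \textbf{upper bound}, observe that the qubit depolarizing channel is additive in the sense of Holevo information --- this is King's theorem~\cite{king2003} --- so the queue-channel $\vec{\cN}_{\vec{W}}$ built on it is an \emph{additive quantum queue-channel} in the sense of the definition given above. Theorem~\ref{thm:queueCapacity_ub} then applies verbatim and yields $C \le \lambda\,\EX_\pi[\chi(\cN_W)]$, where $\cN_W$ is the single-use depolarizing channel with parameter $p(W)$. It therefore remains only to evaluate the single-use Holevo information $\chi(\cN_W)$ and to supply a matching achievability argument.

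Computing $\chi(\cN_W)$ is a standard calculation: the qubit depolarizing channel $\rho \mapsto (1-p)\rho + p\,I/2$ has Holevo capacity $1 - h(p/2)$, attained by the uniform ensemble over a pair of orthogonal pure states (e.g.\ the $\pm 1$ eigenstates of any Pauli operator); see~\cite{king2003,bennett97}. Substituting $p = p(W)$ gives $\chi(\cN_W) = 1 - h(p(W)/2)$, hence $C \le \lambda\,\EX_\pi\!\left[1 - h(p(W)/2)\right]$.

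For the \textbf{lower bound (achievability)} I would mimic the erasure case (Proposition~\ref{prop:lower_bound}): fix the ``classical'' strategy in which bits $0,1$ are encoded into two fixed orthogonal states, the codewords are unentangled across channel uses, and the decoder measures each output qubit in the corresponding fixed basis. Under this restriction a qubit with waiting time $w$ is unaffected with probability $1-p(w)$ and, with probability $p(w)$, is replaced by $I/2$, whose measurement outcome is an unbiased bit; the induced single-use classical channel is therefore a binary symmetric channel with crossover probability $p(w)/2$. Consequently the induced classical queue-channel is a \emph{binary-symmetric queue-channel}, and its capacity --- established in Section~\ref{sec:induced_class} by conditioning on the waiting-time sequence (which the decoder is assumed to know), running a random-coding argument use-by-use in the resulting product channel, and invoking ergodicity of the queue to pass to the stationary expectation --- equals $\lambda\,\EX_\pi\!\left[1 - h(p(W)/2)\right]$. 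This meets the upper bound, completing the proof.

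I expect the \textbf{main obstacle} to be the achievability half rather than the converse. Unlike the erasure queue-channel, where any rate achievable under the classical restriction was already available ``for free'' from prior work, here one must actually prove a coding theorem for the binary-symmetric queue-channel: the channel is non-stationary and has memory (the crossovers $p(W_1), p(W_2), \ldots$ are correlated through the queueing dynamics), so achievability requires conditioning on the waiting-time realization to decorrelate the channel into a product of BSCs with known but varying crossovers, a typicality/random-coding argument in that conditional model, and then an appeal to the ergodic theorem to identify the long-run rate with $\EX_\pi[1 - h(p(W)/2)]$. It is precisely at this step that the hypothesis that \emph{the receiver knows the arrival and departure times} is essential --- for the BSC, unlike the erasure channel, the decoder cannot infer the per-symbol crossover probabilities from the channel output alone, so without the timing side information the argument breaks down, and (as remarked in Section~\ref{sec:depolarizing}) one is then left only with an interval in which the capacity lies.
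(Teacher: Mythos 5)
Your proposal is correct and follows essentially the same route as the paper: King's additivity theorem plus Theorem~\ref{thm:queueCapacity_ub} for the converse, the standard evaluation $\chi(\cN_W)=1-h(p(W)/2)$, and achievability by reducing to the induced binary symmetric queue-channel whose capacity is established in Section~\ref{sec:bsc} (Theorem~\ref{thm:queueCapacityBSC}). The only minor divergence is that you anticipate proving that BSC queue-channel coding theorem by a conditional typicality/random-coding argument, whereas the paper obtains it via information-spectrum (inf-information rate and sup-entropy rate) bounds; your diagnosis of why timing side information is essential there, and of the resulting Jensen gap without it, matches the paper's remark.
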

\begin{proof}
Since the depolarising channel satisfies additivity \cite{king2003}, the upper bound in Theorem~\ref{thm:queueCapacity_ub} holds. Furthermore, the Holevo information for a single-qubit depolarising channel is evaluated to be~\cite{bennett97}
\[  \chi(\cN_{W})=1-h\left(\frac{p(W)}{2}\right). \]
To prove achievability, we first note that the induced classical channel corresponding to the quantum depolarizing channel is a binary symmetric channel with crossover probability $\phi(w)=p(w)/2$ (see ~\cite{bennett97}). Then, we invoke the capacity result proved in Theorem~\ref{thm:queueCapacityBSC} in Sec.~\ref{sec:bsc} for the induced classical channel.  
\end{proof}
\begin{remark}
We note that for the depolarising queue channel, when the arrival and departure times are known to the receiver, we get matching upper and lower bounds from Theorems~\ref{thm:queueCapacity_ub} and \ref{thm:queueCapacityBSC}, thus characterising the capacity exactly. However, when the arrival and departure times are not known to the receiver, there is a `Jensen gap' between the upper and lower bounds in Theorems~\ref{thm:queueCapacity_ub} and \ref{thm:queueCapacityBSC} respectively. Thus, when the waiting times are not known to the receiver, we are unable to precisely characterise the capacity --- we  know only that it lies 
in between $\lambda\left(1 - h\left(\EX_{\pi}\left[\frac{p(W)}{2}\right]\right)\right)$ and $\lambda~\EX_\pi\left[1-h\left(\frac{p(W)}{2}\right)\right]$ bits/sec. 
\end{remark}

\begin{remark}
We remark that the upper bound and the capacity results which have been stated and proved for qubit queue-channels, do extend readily to the case of qu$d$it ($d$-dimensional) queue-channels as well. The upper bound on the queue-channel capacity stated in Theorem~\ref{thm:queueCapacity_ub} is of course independent of the dimension of the states and channels involved; rather it depends only on the additivity of the Holevo information for the channel. However, the capacity expressions do change with the dimensionality of the system. Specifically, the classical capacity of a qu$d$it quantum erasure queue-channel is $C_{d} = \lambda\log_{2} d~\EX_{\pi}\left[1-p(W)\right]$ bits/sec. and the classical capacity of a qu$d$it depolarizing queue-channel (when the waiting times are known to the receiver) is 

{\small \begin{align*}C_{d} = \lambda~\EX_\pi\left[\log_{2}d + \left(1-p+\frac{p}{d}\right)\log_{2}\left(1-p+\frac{p}{d}\right) +\right.\\\left. (d-1)\frac{p}{d}\log_{2}\frac{p}{d}\right] {\rm bits/sec.} \end{align*}}
\end{remark}

\section{Capacity of the induced classical queue-channel}\label{sec:induced_class}

In this section we define and characterize the capacity of the induced classical channel corresponding to a general quantum queue-channel. Formally, the $j$th symbol $X_j \in \mathcal{X}=\{0,1\}$\footnote{Results extend to any finite alphabet.}, is now encoded as one of a set of orthogonal states $\{|\psi_{X_{j}}\rangle\}$ belonging to a Hilbert space $\mathcal{H}$ of dimension $2$. The noisy output state $|\tilde{\psi}_{j}\rangle$ is measured by the receiver in some fixed basis, and decoded as the output symbol $Y_j \in \mathcal{Y}$. This measurement induces a conditional probability distribution $\PR(Y_j|X_j, W_j)$. The \emph{induced classical channel} is a sequence of conditional distributions from ${\mathcal{X}}^n$ to ${\mathcal{Y}}^n$:
\begin{equation}\PR(Y^{n}|X^{n}, W^{n})=\prod_{j=1}^{n} \PR(Y_j|X_j, W_j) \mbox{ for }n \in \{1, 2, \ldots\}.\label{eq:productform}\end{equation}
As argued in Lemma \ref{lem:cond_ind}, $Y_j$ is conditionally independent of other random variables, given $X_j$ and $W_j,$ which leads to the product form above.
 Throughout, $Z^{n} = (Z_1, Z_2, \ldots, Z_n)$ denotes an $n$-dimensional vector
and $\mbf{Z}=(Z_1, Z_2, \ldots, Z_n, \ldots)$ denotes an infinite sequence of random variables.

We are interested in defining and computing the information capacity of this induced classical queue-channel. As mentioned earlier, we restrict  ourselves to using a fixed set of orthogonal states to encode the classical symbols at the sender's side, and measuring in some fixed basis at the receiver's end. For this reason, the capacity of the induced classical channel is not the same as the \emph{classical capacity of the quantum channel} resulting from the underlying decoherence model.  

The motivation behind the study of the induced classical queue-channel is twofold. Firstly, the achievability result (i.e., the capacity lower bound) for this channel is applicable to the classical capacity of the quantum queue-channel studied in the previous section. Indeed, as we show below, when the arrival and departure times of the symbols are known at the receiver, the capacities of the  induced classical erasure and binary symmetric queue-channels are the same as that of the  classical capacities of the corresponding quantum queue-channels.
Secondly, the induced queue-channel has other interesting applications as well. It closely models delay-sensitive communication systems where data packets become useless after a deadline~\cite{JagannathanCM2019,ChatterjeeJM2018Long}.

\subsection{Definitions}
Let $M$ be the message transmitted from a set $\mathcal{M}$ and $\hat{M} \in \mathcal{M}$ be its estimate at the receiver.
\begin{definition}
An $(n, \tilde{R}, \epsilon, T)$ code consists of the encoding function $X^n=f(M)$ and the decoding function $\hat{M} = g(X^n, A^n, D^n)$, where the cardinality of the message set $|\mathcal{M}| = 2^{n\tilde{R}}$, the expected total time for all the symbols of any codeword to reach to the receiver is less than $T$, and the average probability of error at the decoder is less than $\epsilon$.

If the average probability of error of a code is less than $\epsilon$ the code is called an $\epsilon$-achievable code.
\end{definition}
\begin{definition}
For any $0 < \epsilon < 1$, if there exists an $\epsilon$-achievable code $(n, \tilde{R}, \epsilon, T)$, the rate $R = \frac{\tilde{R}}{T}$ is said to be achievable.
\end{definition}
\begin{definition}
\label{def:capacityInd}
The information capacity of the induced classical queue-channel is the supremum of all achievable rates for a given arrival and service process and is denoted by $C_{Ind}$ bits per unit time.
\end{definition}

We assume that the transmitter knows the arrival process statistics, but not the realizations before it
does the encoding.  However, depending on the scenario, the receiver may or may not know the realization of the arrival and the departure time of each symbol.

% % % % % % % % % % % % % % % % % % % % % % % % % % %
The following result is a consequence of the general channel capacity expression in~\cite{VerduH1994} and is useful in deriving the main results on single letter capacity expression.

\begin{proposition}
\label{prop:expressionInd}
The capacity of the queue-channel (in bits/sec) described above is given by
\begin{align}
C_{Ind} = \lambda \sup_{\PR(\mathbf{X})}\underline{\mathbf{I}} (\mbf{X}; \mbf{Y} | \mbf{W}),
\label{eq:cap_expInd}
\end{align}
when the receiver knows the arrival and the departure time of each symbol. On the other hand, when the receiver does not have that information, the capacity is, 
\begin{align}
C_{Ind} = \lambda \sup_{\PR(\mathbf{X})}\underline{\mathbf{I}} (\mbf{X}; \mbf{Y})\mbox{.} 
\label{eq:cap_exp3}
\end{align}
Here, $\underline{\mathbf{I}}$ is the usual notation for classical \emph{inf-information rate} \cite{VerduH1994}. Specifically, $\underline{\mathbf{I}} (\mbf{X}; \mbf{Y})$ is defined as the limit inferior in probability of the \emph{information density} sequence $\left\{\frac1n\log\frac{\PR(Y^n|X^n)}{\PR(Y^n)}\right\}_{n=1}^\infty.$
\end{proposition}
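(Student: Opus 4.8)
The plan is to obtain this proposition as a corollary of the general (Verd\'u--Han) channel capacity formula~\cite{VerduH1994}, combined with a time-normalization argument that converts bits-per-channel-use into bits-per-second, in the same spirit as the queue-channel analyses of~\cite{AnantharamV1996,ChatterjeeSV2017}.

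First I would recast the induced classical queue-channel as a \emph{general channel} in the sense of~\cite{VerduH1994} --- a sequence of conditional laws with no assumed stationarity or memorylessness. When the receiver knows the arrival and departure epochs, I would treat the augmented pair $(Y^n, W^n)$ as the effective channel output for an input block $X^n$; when this information is absent, the output is simply $Y^n$. In either case the general formula of~\cite{VerduH1994} gives that the capacity \emph{in bits per channel use} equals $\sup_{\PR(\mbf{X})} \underline{\mathbf{I}}(\mbf{X}; (\mbf{Y},\mbf{W}))$, respectively $\sup_{\PR(\mbf{X})}\underline{\mathbf{I}}(\mbf{X};\mbf{Y})$, where the supremum is over input processes and $\underline{\mathbf{I}}$ is the classical inf-information rate. (Note that the product structure \eqref{eq:productform} is not needed at this stage, since the Verd\'u--Han formula is fully general; it will be exploited only later for single-letterization.) The remaining algebra is to simplify the side-information expression: because the encoder chooses $X^n = f(M)$ knowing only the arrival and service \emph{statistics}, not their realizations, $\mbf{X}$ is independent of $\mbf{W}$, so
\[
\frac1n\log\frac{\PR(Y^n, W^n\mid X^n)}{\PR(Y^n, W^n)} = \frac1n\log\frac{\PR(Y^n \mid X^n, W^n)}{\PR(Y^n \mid W^n)}
\]
for every $n$, whence $\underline{\mathbf{I}}(\mbf{X}; (\mbf{Y},\mbf{W})) = \underline{\mathbf{I}}(\mbf{X}; \mbf{Y}\mid \mbf{W})$. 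This produces the stated expressions up to the time normalization.

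The time-normalization step is the one requiring care, and I expect it to be the main obstacle. An $(n,\tilde R,\epsilon,T)$ code transmits $n\tilde R$ bits while the expected time for all $n$ symbols to reach the receiver is at most $T$; since the queue dynamics do not depend on the codeword content, this time is $\EX[D_n] = \EX[A_n] + \EX[W_n]$, where $A_n$ is the arrival epoch of the $n$th symbol. Because the arrival process is stationary and ergodic with rate $\lambda$, $A_n/n \to 1/\lambda$ almost surely and in mean, and under stability $\EX[W_n]\to\EX_{\pi}[W]<\infty$, so $T/n \to 1/\lambda$. Consequently a rate of $\tilde R$ bits per channel use corresponds to $\lambda\tilde R$ bits per second, and taking the supremum over input processes yields the leading factor $\lambda$ in front of $\sup_{\PR(\mbf{X})}\underline{\mathbf{I}}(\mbf{X};\mbf{Y}\mid\mbf{W})$, respectively $\sup_{\PR(\mbf{X})}\underline{\mathbf{I}}(\mbf{X};\mbf{Y})$. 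Ensuring that \emph{both} the achievability and converse halves of the Verd\'u--Han argument survive the passage from a fixed-blocklength constraint to an expected-delivery-time constraint (including the interchange of the $\epsilon\to0$ limit with the delay normalization) is the delicate point; I would handle it exactly as in~\cite{AnantharamV1996} and \cite{ChatterjeeSV2017}, to which I would refer for the routine details.
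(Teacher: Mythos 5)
Your proposal is correct and follows essentially the same route as the paper: invoke the Verd\'u--Han general capacity formula with the timing information appended to the output, use the independence of $\mbf{X}$ from the timing to reduce the information density to the conditional form $\underline{\mathbf{I}}(\mbf{X};\mbf{Y}\mid\mbf{W})$, and multiply by $\lambda$ to convert to bits per second. The only cosmetic difference is that the paper starts from the augmented output $(Y^n,A^n,D^n)$ (the decoder's literal side information) and passes to $W^n$ via the Markov chain $(A^n,D^n)\rightarrow W^n\rightarrow Y^n$, whereas you condition on $W^n$ directly; your treatment of the time normalization $T/n\to1/\lambda$ is, if anything, slightly more explicit than the paper's.
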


\begin{IEEEproof}
The case where arrival and departure times of the qubits are not known follows directly from~\cite{VerduH1994} using (limiting) stationarity and ergodicity of the arrival and the departure processes of the queue. Note that a string of $n$ qubits see the joint channel 
$$\PR({Y}^n|{X}^n)=\int_{W^n} \prod_{j=1}^n \PR(Y_j|X_j, W_j) d\PR(W^n),$$
and the number of qubits coming out of the queue per unit time (asymptotically) is 
$\lambda$. Combining these two facts with standard information spectrum results we get the desired capacity result.

When the arrival and the departures times of the qubits are known at the receiver, the channel 
behaves like $X^n \rightarrow (Y^n, A^n, D^n)$. In that case it follows from~\cite{VerduH1994} that 
the capacity of this channel is
$$\lambda \sup_{\PR(\mathbf{X})}\underline{\mathbf{I}} (\mbf{X}; (\mbf{Y},\mbf{A},\mbf{D})).$$
Next, note that $W_j=D_j-A_j$ for all $j,$ and since no information is encoded in timings, $\mbf{X}$ is independent of $(\mbf{A}, \mbf{D})$. By definition,  $\underline{\mathbf{I}} (\mbf{X}; (\mbf{Y},\mbf{A},\mbf{D}))$ is the limit inferior in probability of the following sequence:
\begin{align}
& \ \ \frac{1}{n} \log \frac{\PR(Y^n, A^n, D^n|X^n)}{\PR(Y^n, A^n, D^n)} \nonumber \\
& =  \frac{1}{n} \log \frac{\PR(A^n, D^n|X^n) \PR(Y^n|X^n, A^n, D^n)}{\PR(A^n, D^n)\PR(Y^n|A^n, D^n)} \nonumber \\
& = \frac{1}{n} \log \frac{\PR(Y^n|X^n, A^n, D^n)}{\PR(Y^n|A^n, D^n)} \nonumber \\
& = \frac{1}{n} \log \frac{\PR(Y^n|X^n, W^n)}{\PR(Y^n|W^n)}\label{eq:IXYW}
\end{align}
The step before the last is due to independence of $X^n$ and $(A^n, D^n)$. Also, as per the channel model we have the Markov property $(A^n, D^n) \rightarrow W^n \rightarrow Y^n$, which leads to the final expression. The liminf in probability of the quantity in \eqref{eq:IXYW}, is by definition, equal to $\underline{\mathbf{I}} (\mbf{X}; \mbf{Y} | \mbf{W}).$
\end{IEEEproof}

\subsection{Erasure Queue-Channels}
\label{sec:EC}

Erasure channels are ubiquitous in classical as well as quantum information theory. We consider a classical erasure queue-channel~\cite{bennett97} induced by the quantum erasure queue-channel where the $j$th state $|\psi_{X_{j}}\rangle $ remains unaffected with probability $1 - p(W_{j})$, and is erased to a state $|e\rangle$ with probability $p(W_{j})$, where $p:[0,\infty) \to [0,1]$ is typically increasing. Such a model also captures the communication scenarios where information packets become useless (erased) after a deadline. For such an erasure channel, a single letter expression for capacity can be obtained.

\begin{theorem}
\label{thm:queueCapacityErasureInd}
For the classical erasure queue-channel defined above, the capacity (in bits/sec) is given by $\lambda~\EX_{\pi}\left[1-p(W)\right]$, irrespective of the receiver's knowledge of the arrival and the departure times of symbols.
\end{theorem}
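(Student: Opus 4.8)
\emph{Proof proposal.} The plan is to evaluate the information-spectral capacity formula of Proposition~\ref{prop:expressionInd} for the erasure channel by computing its information density \emph{explicitly}, in both the case where the receiver knows the waiting times and the case where it does not. The structural fact that makes this possible is that an erasure channel output $Y^n$ reveals precisely which coordinates were erased --- call this set $\mathcal{E}\subseteq\{1,\dots,n\}$ --- together with the values $X_{\mathcal{E}^c}$ of the un-erased inputs. Using the conditional product form \eqref{eq:productform} and the fact that the erasure events and the input are independent (no information is encoded in timing), a one-line computation shows that on the realized sample path (with $\mathcal{E}=\{j:Y_j=e\}$),
\[
\frac{\PR(Y^n\mid X^n,W^n)}{\PR(Y^n\mid W^n)} \;=\; \frac{\PR(Y^n\mid X^n)}{\PR(Y^n)} \;=\; \frac{1}{\PR(X_{\mathcal{E}^c}=x_{\mathcal{E}^c})}\,,
\]
since the factor $\prod_{j\in\mathcal{E}}p(W_j)\prod_{j\in\mathcal{E}^c}(1-p(W_j))$ is common to numerator and denominator and cancels. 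In particular both information densities are \emph{nonnegative}, and they do not depend on whether the decoder knows $W^n$ --- this is the analytic reflection of the obvious fact that erasures are \emph{detectable} even without knowing the waiting times.

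\textbf{Achievability.} For the lower bound I would take $\mbf{X}$ i.i.d.\ uniform on $\{0,1\}$, independent of the queue, so that $\PR(X_{\mathcal{E}^c}=x_{\mathcal{E}^c})=2^{-F_n}$ with $F_n:=n-|\mathcal{E}|$ the number of un-erased symbols among the first $n$. Then the information density equals $F_n/n$ in both settings. Conditioned on the waiting-time sequence, the per-symbol erasure indicators are independent Bernoulli; combining Kolmogorov's strong law for the bounded, zero-mean terms $\mathbf{1}\{j\text{ not erased}\}-(1-p(W_j))$ with the ergodic theorem for $\frac1n\sum_{j\le n}(1-p(W_j))$ gives $F_n/n\to\EX_\pi[1-p(W)]$ almost surely, hence in probability. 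Thus $\underline{\mathbf{I}}(\mbf{X};\mbf{Y}\mid\mbf{W})=\underline{\mathbf{I}}(\mbf{X};\mbf{Y})=\EX_\pi[1-p(W)]$ for this input, and Proposition~\ref{prop:expressionInd} yields $C_{Ind}\ge\lambda\,\EX_\pi[1-p(W)]$ regardless of the receiver's knowledge of the timing.

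\textbf{Converse.} Fix any input process $\mbf{X}$ (independent of the queue). Since the information density $\frac1n\log\frac{\PR(Y^n\mid X^n,W^n)}{\PR(Y^n\mid W^n)}$ is nonnegative, a Markov-inequality argument shows that its liminf in probability is at most the liminf of its expectations, i.e.\ $\underline{\mathbf{I}}(\mbf{X};\mbf{Y}\mid\mbf{W})\le\liminf_n\frac1n I(X^n;Y^n\mid W^n)$. Conditioned on $W^n=w^n$ the channel is a product of erasure channels (Lemma~\ref{lem:cond_ind}), so $I(X^n;Y^n\mid W^n=w^n)\le\sum_{j=1}^n I(X_j;Y_j\mid W_j=w_j)\le\sum_{j=1}^n(1-p(w_j))$, the last inequality being the single-use capacity of a binary erasure channel. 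Averaging over $W^n$ and invoking the (limiting) stationarity of the waiting times (cf.\ the Remark in Section~\ref{sec:model}), $\frac1n I(X^n;Y^n\mid W^n)\le\frac1n\sum_{j=1}^n(1-\EX[p(W_j)])\to\EX_\pi[1-p(W)]$; since this holds for every input, $C_{Ind}\le\lambda\,\EX_\pi[1-p(W)]$ when the receiver knows the waiting times. When it does not, the independence of $\mbf{X}$ and $\mbf{W}$ gives $\underline{\mathbf{I}}(\mbf{X};\mbf{Y})\le\underline{\mathbf{I}}(\mbf{X};(\mbf{Y},\mbf{W}))=\underline{\mathbf{I}}(\mbf{X};\mbf{Y}\mid\mbf{W})$, exactly as in the proof of Proposition~\ref{prop:expressionInd}, so the same bound applies. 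Together with the achievability bound this establishes the claimed capacity in both cases.

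\textbf{Main obstacle.} The delicate point is that Proposition~\ref{prop:expressionInd} delivers the capacity as an \emph{information-spectral} quantity (a liminf in probability), whereas the standard converse toolkit --- subadditivity of mutual information, single-letterization --- operates on expectations. What rescues the erasure channel is precisely the \emph{nonnegativity} of its information density, which licenses the passage from the liminf in probability to the $\liminf$ of expectations; for a general channel this step is simply false. The remaining work is bookkeeping: since the queue starts empty the waiting-time sequence is only asymptotically stationary, and one must check that this transient, as well as the auxiliary Bernoulli randomness in the erasures, leaves the Ces\`aro limits unaffected --- this is where the stationarity/ergodicity hypotheses recorded in the Remark of Section~\ref{sec:model} enter.
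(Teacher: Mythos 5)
Your proposal is correct, but it takes a genuinely different route from the paper's. The paper sandwiches $\underline{\mathbf{I}}(\mbf{X};\mbf{Y}|\mbf{W})$ between differences of spectral entropy rates, \eqref{eq:limsup_I}--\eqref{eq:liminf_I}, and then evaluates $\overline{\mbf{H}}(\mbf{Y}|\mbf{X},\mbf{W})$, $\overline{\mbf{H}}(\mbf{Y}|\mbf{W})$ and $\underline{\mbf{H}}(\mbf{Y}|\mbf{W})$ separately (Lemmas~\ref{lem:HUbarYgivenXandWErasure}--\ref{lem:HUbarYgivenWErasureAchieve}); the term $\EX_\pi[h(p(W))]$ appears in each of these and cancels only when the bounds are combined, and the timing-unknown case is then handled by a separate MAP-decoder dominance argument. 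You instead compute the information density in closed form and observe that the waiting-time-dependent factors cancel \emph{pointwise}, so that the conditional and unconditional densities coincide and both equal $-\frac{1}{n}\log\PR(X_{\mathcal{E}^c}=x_{\mathcal{E}^c})\ge 0$. This buys you three things at once: the timing-known and timing-unknown statements reduce to the same scalar sequence, making the paper's MAP argument unnecessary; achievability becomes a law-of-large-numbers statement about the fraction of unerased symbols; and nonnegativity licenses the passage from the liminf in probability to the liminf of expectations, after which the converse is a textbook single-letterization of mutual information over a conditional product channel. The paper's entropy-rate decomposition is less tailored to erasures but is a reusable template --- it is recycled essentially verbatim for the binary symmetric queue-channel in Theorem~\ref{thm:queueCapacityBSC}, where your nonnegativity trick is unavailable, as you correctly flag. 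Your handling of the transient (the queue starts empty, so $\{W_j\}$ is only asymptotically stationary, and one needs $\EX[p(W_j)]\to\EX_\pi[p(W)]$) is at the same level of rigor as the paper's own appeal to ergodicity, so I see no gap.
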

\begin{remark}
For a general queue-channel, the capacity could depend on the receiver's knowledge (or lack thereof) of the arrival and the departure times of symbols. However, for the erasure case,  the received symbol is either correct or is erased, but is never wrong. This makes the knowledge of the arrival and departure times irrelevant, and the capacity remains the same in both cases. 
\end{remark}
\begin{IEEEproof}[Proof of Theorem~\ref{thm:queueCapacityErasureInd}]
We first prove the result for the case when the receiver knows the arrival and the departure times. 

The proof uses  an upper-bound on $\underline{\mathbf{I}}(\mbf{X};\mbf{Y}|\mbf{W})$ in terms of two conditional sup-entropy rates rate and shows that $\EX_{\pi}\left[1-p(W)\right]$ is an upper-bound on $\underline{\mathbf{I}}(\mbf{X};\mbf{Y}|\mbf{W})$. On the other hand, using a similar lower-bound on $\underline{\mathbf{I}}(\mbf{X};\mbf{Y}|\mbf{W})$ we show that for a choice of distribution of $\{X_n\}$ (namely, i.i.d. uniform),  $\underline{\mathbf{I}}(\mbf{X};\mbf{Y}|\mbf{W})$ is no smaller than $\EX_{\pi}\left[1-p(W)\right]$. More precisely, we use the following two bounds, which follow from the properties of limit superior and limit inferior~\cite[Chap.~3]{Han2003}:
\begin{eqnarray}
		\underline{\mathbf{I}} (\mbf{X}; \mbf{Y} | \mbf{W}) &\le& \overline{\mbf{H}}(\mbf{Y}|\mbf{W}) - \overline{\mbf{H}}(\mbf{Y}|\mbf{X},\mbf{W}), \label{eq:limsup_I} \\
\underline{\mathbf{I}} (\mbf{X}; \mbf{Y} | \mbf{W}) &\ge& \underline{\mbf{H}}(\mbf{Y}|\mbf{W}) - \overline{\mbf{H}}(\mbf{Y}|\mbf{X},\mbf{W}), \label{eq:liminf_I}
\end{eqnarray}
where $\overline{\mbf{H}}(\mbf{Y}|\mbf{W})$ and $\overline{\mbf{H}}(\mbf{Y}|\mbf{X},\mbf{W})$ are respectively the lim-sup in probability of the sequences $\left\{\frac{1}{n} \log \frac{1}{\PR(Y^n|W^n)}\right\}_{n=1}^\infty$ and $\left\{\frac{1}{n} \log \frac{1}{\PR(Y^n|X^n,W^n)}\right\}_{n=1}^\infty$. Similarly, $\underline{\mbf{H}}(\mbf{Y}|\mbf{W})$ is the lim-inf in probability of $\left\{\frac{1}{n} \log \frac{1}{\PR(Y^n|W^n)}\right\}_{n=1}^\infty$. 

We now state three lemmas, which lead to the capacity result. 
\begin{lemma}
\label{lem:HUbarYgivenXandWErasure}
For the induced classical erasure queue-channel,
$$\overline{\mbf{H}}(\mbf{Y}|\mbf{X},\mbf{W})=\EX_\pi[h(p(W))],$$
irrespective of the choice of $\PR_{\mbf{X}}$.
\end{lemma}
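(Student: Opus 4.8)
The plan is to compute the lim-sup in probability of the sequence $\left\{\frac{1}{n}\log\frac{1}{\PR(Y^n|X^n,W^n)}\right\}_{n=1}^\infty$ directly, exploiting the product structure \eqref{eq:productform} of the induced classical channel and the ergodicity of the queue. First I would observe that, conditioned on $X^n$ and $W^n$, the coordinates $Y_j$ are independent, so $\log\frac{1}{\PR(Y^n|X^n,W^n)} = \sum_{j=1}^n \log\frac{1}{\PR(Y_j|X_j,W_j)}$. For the erasure channel, given $X_j$ and $W_j=w_j$, the output $Y_j$ equals $X_j$ with probability $1-p(w_j)$ and equals the erasure symbol $e$ with probability $p(w_j)$; hence the random variable $\log\frac{1}{\PR(Y_j|X_j,W_j)}$ takes the value $\log\frac{1}{1-p(w_j)}$ or $\log\frac{1}{p(w_j)}$ with those respective probabilities, and its conditional expectation given $W_j=w_j$ is exactly $h(p(w_j))$, the binary entropy. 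Crucially, this is independent of the value of $X_j$, which is why the answer does not depend on $\PR_{\mbf X}$.

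Next I would invoke the ergodicity of the queue (as set up in the Remark in Section~\ref{sec:model}) to argue that the empirical average $\frac1n\sum_{j=1}^n h(p(W_j))$ converges almost surely, and hence in probability, to $\EX_\pi[h(p(W))]$, where $\pi$ is the stationary waiting-time distribution. The remaining point is to pass from the average of the conditional means to the average of the $\log\frac{1}{\PR(Y_j|X_j,W_j)}$ themselves, i.e. to show the fluctuation term $\frac1n\sum_{j=1}^n\big(\log\frac{1}{\PR(Y_j|X_j,W_j)} - h(p(W_j))\big)$ vanishes in probability. Conditioned on the whole waiting-time sequence $\mbf W$ (and on $\mbf X$), these summands are independent, mean zero, and bounded in second moment by a function of $p(W_j)$; a conditional weak law of large numbers (e.g. a Chebyshev/$L^2$ argument, with care near $p(w)\in\{0,1\}$ where the summand is degenerate anyway) gives convergence to $0$ in probability conditionally on $\mbf W$, and then averaging over $\mbf W$ via bounded convergence yields unconditional convergence in probability. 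Combining the two pieces shows the sequence converges in probability to the constant $\EX_\pi[h(p(W))]$, so its lim-sup in probability (and lim-inf in probability) equals that constant, which is the claim.

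I expect the main obstacle to be the technical handling of the fluctuation term: one must be careful that $\log\frac{1}{1-p(w)}$ and $\log\frac{1}{p(w)}$ are unbounded as $p(w)\to 1$ or $p(w)\to 0$, so a naive uniform bound fails. The clean way around this is to note that when $p(w)$ is close to $0$ or $1$ the conditional variance of the summand is itself small (it scales like $p(w)\log^2 p(w)$ near $0$, which is bounded), so the $L^2$ bound used in the conditional Chebyshev step is in fact uniformly controlled; alternatively one can truncate and handle the rare large values separately. A secondary, more bookkeeping-level point is to make precise the sense in which ``ergodicity of the queue'' delivers the a.s.\ convergence of $\frac1n\sum_j h(p(W_j))$ — this is standard once one views $W_j$ as a functional of the stationary, ergodic queueing process, but it should be stated explicitly since the queue-channel itself is non-stationary (it starts empty).
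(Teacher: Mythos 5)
Your proposal is correct and follows essentially the same route as the paper: expand $\log\frac{1}{\PR(Y^n|X^n,W^n)}$ into the per-symbol sum $-\sum_{i=1}^n\left[\mathbf{1}(Y_i=\mathcal{E})\log p(W_i)+\mathbf{1}(Y_i\neq \mathcal{E})\log(1-p(W_i))\right]$, note that it does not depend on $X^n$, and use ergodicity to identify the limit $\EX_\pi[h(p(W))]$, which then equals both the lim-sup and lim-inf in probability. The only difference is that you explicitly split off the fluctuation of the realized log-likelihood around its conditional mean $h(p(W_j))$ and control it by a conditional Chebyshev argument, whereas the paper absorbs this step into a single appeal to ergodicity of the queue; your version is somewhat more careful on that point but is not a different proof.
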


\begin{lemma}
\label{lem:HUbarYgivenWErasure}
For the induced classical erasure queue-channel,
$$\overline{\mbf{H}}(\mbf{Y}|\mbf{W})\le 1+\EX_\pi[h(p(W))]-\EX_\pi[p(W)], $$
irrespective of the choice of $\PR_{\mbf{X}}$.
\end{lemma}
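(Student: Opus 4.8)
The plan is to exploit the erasure structure so that the ``per-letter surprise'' $\frac{1}{n}\log\frac{1}{\PR(Y^n\mid W^n)}$ splits into a term already computed in Lemma~\ref{lem:HUbarYgivenXandWErasure} and a residual term coming only from the non-erased coordinates. Write $Z_j := \mathbf{1}(Y_j = e)$ for the erasure indicator of the $j$th symbol; conditioned on $W^n$ the $Z_j$ are independent, $Z_j\sim\mathrm{Bernoulli}(p(W_j))$, and independent of the input $X^n$. Since the erasure symbol is distinguishable at the receiver, the realised output determines the erasure pattern, and because that pattern is conditionally independent of $X^n$ given $W^n$, one has at the realised values
\[
\PR(Y^n\mid W^n) \;=\; \PR(Y^n\mid X^n, W^n)\;\PR\bigl(X_{\bar Z}=Y_{\bar Z}\mid W^n\bigr),\qquad \bar Z:=\{j: Z_j=0\}.
\]
Taking $-\frac1n\log$ of both sides, $\frac1n\log\frac{1}{\PR(Y^n\mid W^n)} = A_n + B_n$ with $A_n := \frac1n\log\frac{1}{\PR(Y^n\mid X^n,W^n)}$ and $B_n := \frac1n\log\frac{1}{\PR(X_{\bar Z}=Y_{\bar Z}\mid W^n)}$. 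The sequence $A_n$ is exactly the information-spectrum quantity analysed in Lemma~\ref{lem:HUbarYgivenXandWErasure}, so its lim-sup in probability equals $\EX_\pi[h(p(W))]$, irrespective of $\PR_{\mbf X}$; it remains to control $B_n$.

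For the residual term, condition on $(W^n,Z^n)=(w^n,z^n)$: then $Y_{\bar Z}=X_{\bar Z}$ has the law $\mu := \PR(X_{\bar Z}\in\cdot\mid W^n=w^n)$, which is supported on $\{0,1\}^{|\bar Z|}$, hence on at most $2^{\,|\bar Z|}=2^{\,n-K_n}$ points, where $K_n:=\sum_{j\le n}Z_j$. The elementary inequality $\PR_\mu[\mu(Y_{\bar Z})<t]=\sum_{a:\mu(a)<t}\mu(a)\le t\,|\mathrm{supp}\,\mu|$ gives, for any $\beta$,
\[
\PR\bigl[B_n>\beta\ \big|\ W^n,Z^n\bigr]\;\le\;2^{-n\beta}\,2^{\,n-K_n}\;=\;2^{\,n-K_n-n\beta},
\]
valid for \emph{every} input distribution. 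Now the pair process $(W_j,Z_j)$ is stationary and ergodic (a memoryless randomisation of the ergodic waiting-time process is again ergodic), so by the ergodic theorem $K_n/n\to\EX_\pi[p(W)]$ almost surely; fixing $\epsilon>0$, the event $G_n:=\{K_n>n(\EX_\pi[p(W)]-\epsilon)\}$ satisfies $\PR(G_n)\to1$, and choosing $\beta = 1-\EX_\pi[p(W)]+2\epsilon$,
\[
\PR[B_n>\beta]\;\le\;\PR(G_n^{\mathrm c})+\EX\bigl[\mathbf{1}_{G_n}\,2^{\,n-K_n-n\beta}\bigr]\;\le\;\PR(G_n^{\mathrm c})+2^{-n\epsilon}\;\longrightarrow\;0.
\]
Since $\epsilon>0$ is arbitrary, $\limsup_n B_n\le 1-\EX_\pi[p(W)]$ in probability.

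Finally, by sub-additivity of the lim-sup in probability (if $A_n\to a$ in probability and $\limsup_n B_n\le b$ in probability, then $\limsup_n(A_n+B_n)\le a+b$ in probability), combining the two bounds yields $\overline{\mbf H}(\mbf Y\mid\mbf W)\le \EX_\pi[h(p(W))] + 1 - \EX_\pi[p(W)] = 1 + \EX_\pi[h(p(W))] - \EX_\pi[p(W)]$, which is the claimed inequality; the bound manifestly does not depend on $\PR_{\mbf X}$.

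I expect the only genuinely non-routine step to be the control of $B_n$: the contribution of the non-erased coordinates must be bounded \emph{uniformly over all input distributions}, including strongly correlated, non-product ones, for which $\PR(Y^n\mid W^n)$ does not factor. The point is that, after conditioning on the erasure pattern, this contribution is the self-information of a random vector living in a space of size $2^{|\bar Z|}$, so a crude counting/union bound over the support --- together with the ergodic law of large numbers for the erasure count $K_n$ --- suffices. The decomposition in the first step, the reuse of Lemma~\ref{lem:HUbarYgivenXandWErasure}, and the closing sub-additivity argument are routine bookkeeping.
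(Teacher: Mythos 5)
Your proof is correct and follows essentially the same route as the paper's: the same factorization of $\PR(Y^n\mid W^n)$ into the erasure-pattern term (handled via Lemma~\ref{lem:HUbarYgivenXandWErasure} and ergodicity) and a residual term supported on the non-erased coordinates, bounded by $1-\EX_\pi[p(W)]$. Your counting bound $\PR[B_n>\beta\mid W^n,Z^n]\le 2^{\,n-K_n-n\beta}$ in fact makes rigorous a step the paper states only loosely (that the normalized self-information of the surviving symbols is ``upper-bounded by the entropy rate of an i.i.d.\ Bernoulli$(0.5)$ process''), since such a bound need not hold pointwise for non-uniform $\PR_{\mbf{X}}$ but does hold in probability, exactly via your support-size argument.
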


\begin{lemma}
\label{lem:HUbarYgivenWErasureAchieve}
For uniform and i.i.d $\{X_i\}$
$$\underline{\mbf{H}}(\mbf{Y}|\mbf{W})=1+\EX_\pi[h(p(W))]-\EX_\pi[p(W)].$$
\end{lemma}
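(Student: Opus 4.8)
The plan is to establish a \emph{conditional} asymptotic equipartition property for the output process given the waiting-time sequence $\mbf{W}$, and then remove the conditioning by averaging, using the ergodicity of the queue.

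First I would write $\PR(Y^n \mid W^n)$ explicitly. Since $\{X_i\}$ is i.i.d.\ uniform on $\{0,1\}$ and, by Lemma~\ref{lem:cond_ind}, the channel factorizes once we condition on $W^n$, we have $\PR(Y_j = e \mid W_j) = p(W_j)$ and $\PR(Y_j = x \mid W_j) = \tfrac12\bigl(1 - p(W_j)\bigr)$ for $x \in \{0,1\}$. Writing $V_j := \mathbf{1}\{Y_j = e\}\bigl(-\log p(W_j)\bigr) + \mathbf{1}\{Y_j \neq e\}\bigl(1 - \log(1 - p(W_j))\bigr)$ (with the conventions $0\log 0 = 0$, $0\cdot\infty = 0$ to cover $p(W_j)\in\{0,1\}$, e.g.\ the atom of the waiting time at $0$), this gives
\[
-\frac1n \log \PR(Y^n \mid W^n) \;=\; \frac1n \sum_{j=1}^n V_j .
\]
Conditioned on the entire sequence $\mbf{W}$, the summands $V_j$ are independent, since $Y_j$ depends only on $(X_j, W_j)$ and the $X_j$'s are independent of everything else.

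Second, I would compute the conditional moments: $\EX[V_j \mid W_j] = h(p(W_j)) + 1 - p(W_j)$ and $\mathrm{Var}(V_j \mid W_j) = p(W_j)\bigl(1 - p(W_j)\bigr)\bigl(\log\tfrac{1 - p(W_j)}{p(W_j)} - 1\bigr)^2$. The point is that $t \mapsto t(1-t)\bigl(\log\tfrac{1-t}{t} - 1\bigr)^2$ extends continuously to $[0,1]$ (it vanishes at both endpoints), hence is bounded by a universal constant $K$; therefore $\mathrm{Var}\bigl(\tfrac1n\sum_{j} V_j \mid \mbf{W}\bigr) \le K/n$. Meanwhile, by ergodicity of the queue,
\[
\frac1n \sum_{j=1}^n \EX[V_j \mid W_j] = \frac1n \sum_{j=1}^n \bigl(h(p(W_j)) + 1 - p(W_j)\bigr) \;\longrightarrow\; 1 + \EX_\pi[h(p(W))] - \EX_\pi[p(W)] \quad\text{a.s.}
\]

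Third, Chebyshev's inequality combined with the two displays above shows that, conditioned on $\mbf{W}$, $\tfrac1n\sum_j V_j$ converges in probability to $1 + \EX_\pi[h(p(W))] - \EX_\pi[p(W)]$ for almost every realization of $\mbf{W}$; a bounded-convergence argument then removes the conditioning and yields convergence in unconditional probability. Hence both the lim-inf and the lim-sup in probability of $-\tfrac1n\log\PR(Y^n\mid W^n)$ equal this constant, which is the claimed value of $\underline{\mbf{H}}(\mbf{Y}\mid\mbf{W})$ (and, together with Lemma~\ref{lem:HUbarYgivenWErasure}, also fixes $\overline{\mbf{H}}(\mbf{Y}\mid\mbf{W})$). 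I expect the main technical nuisance to be the uniform boundedness of the conditional variance: the term $-\log p(W_j)$ blows up as $p(W_j)\to 0$, so one must verify that the factor $p(W_j)$ tames it (the $t\log^2 t$ behaviour near $0$), and handle the symmetric issue near $p = 1$ and the boundary/atom cases; once that is settled, the rest is a routine conditional weak law of large numbers plus the ergodic theorem.
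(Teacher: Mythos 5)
Your proof is correct, but it follows a different route from the paper's. The paper proves this lemma in one line by reusing the decomposition \eqref{eq:erasure2}--\eqref{eq:erasure5} from the proof of Lemma~\ref{lem:HUbarYgivenWErasure}: there, $-\log \PR(Y^n|W^n)$ splits into the log-likelihood of the erasure pattern, whose normalized almost-sure limit is $\EX_\pi[h(p(W))]$ by ergodicity, plus $-\log \PR_{\mbf{X}}(\{Y_i : i \notin J^{\mc{E}}\})$, which for i.i.d.\ uniform input is \emph{exactly} $n - |J^{\mc{E}}|$ bits and normalizes to $1 - \EX_\pi[p(W)]$ almost surely; no variance estimate is needed because that second term is deterministic given the erasure count. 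You instead exploit the full single-letter factorization $\PR(Y^n|W^n) = \prod_j \PR(Y_j|W_j)$ (valid here precisely because the input is a product distribution) and run a conditional Chebyshev argument on the information density $\frac1n\sum_j V_j$, combining a conditional weak law with the ergodic theorem for the conditional means $h(p(W_j)) + 1 - p(W_j)$. This is self-contained and does not lean on Lemma~\ref{lem:HUbarYgivenWErasure}'s chain of identities, at the cost of having to verify the uniform bound on $t(1-t)\bigl(\log\frac{1-t}{t}-1\bigr)^2$, which you do correctly; the price is that you obtain convergence in probability rather than the almost-sure convergence the paper gets, but that is all the definition of $\underline{\mbf{H}}(\mbf{Y}|\mbf{W})$ as a lim-inf in probability requires. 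Both arguments are sound and yield the same constant $1 + \EX_\pi[h(p(W))] - \EX_\pi[p(W)]$.
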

The proofs of Lemmas~\ref{lem:HUbarYgivenXandWErasure}, \ref{lem:HUbarYgivenWErasure}, and \ref{lem:HUbarYgivenWErasureAchieve} appear in Appendix~\ref{sec:proofs}. Combining Lemmas~\ref{lem:HUbarYgivenXandWErasure} and \ref{lem:HUbarYgivenWErasure} we get a $\PR_{\mbf{X}}$-independent upper-bound on $\underline{\mathbf{I}}(\mbf{X};\mbf{Y}|\mbf{W}):$ $$\underline{\mathbf{I}} (\mbf{X}; \mbf{Y} | \mbf{W})\leq 1-\EX_\pi[p(W)].$$ A matching lower bound on $\underline{\mathbf{I}}(\mbf{X};\mbf{Y}|\mbf{W})$ is readily obtained using  Lemmas~\ref{lem:HUbarYgivenXandWErasure} and \ref{lem:HUbarYgivenWErasureAchieve}. Thus, for the erasure queue-channel,
$$\sup_{\PR_X} \underline{\mathbf{I}}(\mbf{X};\mbf{Y}|\mbf{W})=1-\EX_\pi[p(W)].$$
We obtain the capacity of this channel by multiplying the above expression by $\lambda$ (see Proposition~\ref{prop:expressionInd}). 

When the arrival and departure times are \emph{not} available  at the receiver, one way to prove the capacity result  is to analyze $\underline{\mathbf{I}}(\mbf{X};\mbf{Y})$. However, it is easier to use a dominance argument for the probability of errors in the two scenarios: arrival and departure times known, versus unknown, at the receiver. Consider any encoding and decoding scheme which attains capacity when the arrival and departure times are known at the receiver. Note that under this encoding scheme the rate is close to the capacity and the probability of error is tending to $0$. By the optimality of MAP (maximum aposterior probability) decoding, if we replace the decoding scheme by a MAP decoder the probability error cannot increase. For the case where the receiver does not know arrival and departure times, we can use the \emph{same} encoding scheme, followed by a MAP decoding.  For an erasure channel, the MAP decoding depends only on the prior probabilities of the codewords that match the received codeword at the non-erased symbols. Therefore, the knowledge of $\{p(W_i)\}$ (and hence that of arrival and departure times) has no effect on the MAP decoding. Thus, it follows that the same rate and the probability of error are also achievable in the absence of arrival and departure times.
\end{IEEEproof}

Note that the above theorem, particularly the achievability result, completes the proof of Proposition~\ref{prop:lower_bound}.
Furthermore, since the capacity of the induced classical erasure queue-channel is same as the classical capacity of the quantum erasure queue-channel, the
results in Theorem~\ref{thm:bestArrMG1}, \ref{thm:bestServMG1}, and \ref{thm:bestArrMM1GenPsi} are applicable here as well. 

\subsection{ Binary Symmetric Queue-channels}\label{sec:bsc}

In this section, we state our results for a binary symmetric queue-channel. This classical channel is induced by the depolarizing quantum queue-channel. 
 The probability of a bit flip is captured by a function $\phi:[0,\infty) \to [0,0.5]$, $\phi(W_i)$, where
 $\PR(Y_i\neq X_i|X_i, W_i) = \phi(W_i)$. Note that the $p(W)$ for the depolarizing channel and $\phi(W)$ are related as $\phi(W)=\frac{p(W)}{2}$.

\begin{theorem}
\label{thm:queueCapacityBSC}
For a binary symmetric queue-channel, the capacity is 
$\lambda\left(1 - \EX_{\pi}\left[h(\phi(W))\right]\right)$ bits/sec 
when the receiver knows the arrival and departure times of the symbols. The capacity is lower bounded by
$\lambda\left(1 - h(\EX_{\pi}\left[\phi(W)\right])\right)$ when the receiver does not know the arrival and departure times
of the symbols.
\end{theorem}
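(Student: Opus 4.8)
The plan is to mirror, for the binary symmetric queue-channel, the information-spectrum argument already used for the erasure case in Theorem~\ref{thm:queueCapacityErasureInd}. By Proposition~\ref{prop:expressionInd}, when the receiver knows the arrival and departure times it suffices to compute $\sup_{\PR(\mbf X)}\underline{\mathbf I}(\mbf X;\mbf Y\mid\mbf W)$ and multiply by $\lambda$; when the receiver does not, the relevant quantity is $\sup_{\PR(\mbf X)}\underline{\mathbf I}(\mbf X;\mbf Y)$. The key structural fact is the same product form as in \eqref{eq:productform}: conditioned on $W^n$, the channel is a product of $n$ binary symmetric channels with crossover probabilities $\phi(W_1),\dots,\phi(W_n)$. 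I would again use the sandwich bounds
\begin{eqnarray}
\underline{\mathbf{I}}(\mbf X;\mbf Y\mid\mbf W) &\le& \overline{\mbf H}(\mbf Y\mid\mbf W)-\overline{\mbf H}(\mbf Y\mid\mbf X,\mbf W),\nonumber\\
\underline{\mathbf{I}}(\mbf X;\mbf Y\mid\mbf W) &\ge& \underline{\mbf H}(\mbf Y\mid\mbf W)-\overline{\mbf H}(\mbf Y\mid\mbf X,\mbf W),\nonumber
\end{eqnarray}
and evaluate the three conditional entropy rates.

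First I would establish the analogue of Lemma~\ref{lem:HUbarYgivenXandWErasure}: conditioned on $X^n$ and $W^n$, the output noise is independent across coordinates, so $\frac1n\log\frac1{\PR(Y^n\mid X^n,W^n)}$ is an empirical average of terms whose conditional expectation given $W_i$ is $h(\phi(W_i))$; by the ergodic theorem for the stationary waiting-time sequence this converges almost surely (hence in probability) to $\EX_\pi[h(\phi(W))]$, giving $\overline{\mbf H}(\mbf Y\mid\mbf X,\mbf W)=\EX_\pi[h(\phi(W))]$ regardless of $\PR_{\mbf X}$. Next, for the output entropy rate, the crucial observation is that with an i.i.d.\ uniform input, each $Y_i$ is uniform on $\{0,1\}$ \emph{and} the $Y_i$ are i.i.d.\ uniform even after conditioning on $W^n$ (since a uniform bit XOR any independent noise bit is uniform), so $\frac1n\log\frac1{\PR(Y^n\mid W^n)}=1$ deterministically; hence $\overline{\mbf H}(\mbf Y\mid\mbf W)=\underline{\mbf H}(\mbf Y\mid\mbf W)=1$ for this input. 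For a general input one still has $\overline{\mbf H}(\mbf Y\mid\mbf W)\le 1$ by the standard bound $\tfrac1n\log\tfrac1{\PR(Y^n\mid W^n)}\le 1$ coordinatewise (uniform maximises entropy). Combining: the upper bound gives $\underline{\mathbf I}(\mbf X;\mbf Y\mid\mbf W)\le 1-\EX_\pi[h(\phi(W))]$ for every $\PR_{\mbf X}$, and the lower bound with i.i.d.\ uniform input gives $\underline{\mathbf I}(\mbf X;\mbf Y\mid\mbf W)\ge 1-\EX_\pi[h(\phi(W))]$, so the supremum equals $1-\EX_\pi[h(\phi(W))]$ and the capacity is $\lambda(1-\EX_\pi[h(\phi(W))])$.

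For the case where the receiver does \emph{not} know the arrival and departure times, I would prove only the lower bound $\lambda(1-h(\EX_\pi[\phi(W)]))$, which is what the theorem claims. Here I use the same i.i.d.\ uniform input and the bound $\underline{\mathbf I}(\mbf X;\mbf Y)\ge \underline{\mbf H}(\mbf Y)-\overline{\mbf H}(\mbf Y\mid\mbf X)$. As above, with uniform input $Y^n$ is i.i.d.\ uniform unconditionally, so $\underline{\mbf H}(\mbf Y)=1$. For $\overline{\mbf H}(\mbf Y\mid\mbf X)$ one writes $\frac1n\log\frac1{\PR(Y^n\mid X^n)}$; unlike the conditioned-on-$W$ case, the summands are now correlated through the waiting times, but $\PR(Y_i\mid X_i)$ still has the form of a $\mathrm{Bernoulli}(\EX[\phi(W_i)])$-type marginal only after averaging, so one bounds $\overline{\mbf H}(\mbf Y\mid\mbf X)$ by $h$ evaluated at the time-averaged flip probability and invokes concavity of $h$ together with ergodicity; the marginal (steady-state) flip probability is $\EX_\pi[\phi(W)]$, yielding $\overline{\mbf H}(\mbf Y\mid\mbf X)\le h(\EX_\pi[\phi(W)])$ and hence the claimed lower bound after multiplying by $\lambda$. (An alternative, cleaner route is a data-processing/degradation argument: the channel without timing information is a degraded version of the one with it, but directly controlling $\overline{\mbf H}(\mbf Y\mid\mbf X)$ via concavity is the self-contained approach.)

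The main obstacle I anticipate is the tightness of the output-entropy rate estimates when the input is \emph{not} i.i.d.\ uniform: showing $\overline{\mbf H}(\mbf Y\mid\mbf W)\le 1$ in probability (not just in expectation) requires care because $\PR(Y^n\mid W^n)$ is a mixture over $X^n$, so one cannot simply factor it; the clean resolution is that the per-coordinate bound $-\log\PR(Y_i\mid Y^{i-1},W^n)\le 1$ holds since any conditional distribution on $\{0,1\}$ has entropy at most $1$, which telescopes to give the bound surely and sidesteps the mixture issue entirely. The achievability direction, by contrast, is easy precisely because i.i.d.\ uniform input makes every conditional and unconditional output law exactly uniform, collapsing all the spectral quantities to constants. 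The remaining routine work is the ergodic-theorem bookkeeping to identify the time averages with the stationary expectations $\EX_\pi[\cdot]$, exactly as in the proofs of Lemmas~\ref{lem:HUbarYgivenXandWErasure}--\ref{lem:HUbarYgivenWErasureAchieve}.
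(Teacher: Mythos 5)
Your Case~I argument is exactly the paper's: the same sandwich bounds \eqref{eq:limsup_I}--\eqref{eq:liminf_I}, the observation that $\PR(Y^n|X^n,W^n)=\PR(N^n|W^n)$ with $N_i\sim\mathrm{Bernoulli}(\phi(W_i))$ conditionally independent given $W^n$, ergodicity to get $\overline{\mbf H}(\mbf Y|\mbf X,\mbf W)=\EX_\pi[h(\phi(W))]$, the coordinatewise bound $\overline{\mbf H}(\mbf Y|\mbf W)\le 1$, and i.i.d.\ uniform inputs (which make $\mbf Y$ i.i.d.\ uniform and independent of $\mbf W$) for achievability. That part is complete.

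The one place your sketch falls short of a proof is the bound $\overline{\mbf H}(\mbf Y|\mbf X)\le h(\EX_\pi[\phi(W)])$ in Case~II. Writing $\PR(Y^n|X^n)=\PR(N^n)$, the quantity to control is the almost-sure limit of $-\frac1n\log\PR(N^n)$ for the \emph{correlated} noise process $\{N_i\}$; there is no pointwise inequality of the form $-\log\PR(N^n)\le\sum_i-\log\PR(N_i)$, so ``take the time-averaged flip probability and apply concavity of $h$'' does not go through as stated. What is needed first is a Shannon--McMillan--Breiman-type identification of that limit with the entropy rate of $\{N_i\}$, after which conditioning-reduces-entropy (or Jensen) bounds the entropy rate by the marginal entropy $h(\EX_\pi[\phi(W)])$. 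The paper supplies exactly this step via successive $k$-th order Markov approximations of the stationary ergodic process $\{N_i\}$, following \cite[Lemma~13.5.4]{CoverT2006}: for each $k$ the ergodic theorem gives the almost-sure limit $H(\tilde N_{k+1}\,|\,\tilde N_1,\dots,\tilde N_k)$, which iterated expectation, Jensen's inequality for $h$, and stationarity of the marginal $\mathrm{Bernoulli}(\EX_\pi[\phi(W)])$ bound by $h(\EX_\pi[\phi(W)])$ uniformly in $k$. With that device filled in, your argument coincides with the paper's.
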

\begin{IEEEproof}

\noindent {\em Case I: arrival and departures times are known at the receiver}

\noindent These steps are similar to the proof of the capacity result for the erasure queue-channel.  Once again, we use the inequalities in \eqref{eq:limsup_I} and \eqref{eq:liminf_I}
%\begin{align}
%\underline{\mathbf{I}} (\mbf{X}; \mbf{Y}|\mbf{W}) \le \overline{\mbf{H}}(\mbf{Y}|\mbf{W}) - \overline{\mbf{H}}(\mbf{Y}|\mbf{X},\mbf{W}), \mbox{ and} \nonumber \\
%\underline{\mathbf{I}} (\mbf{X}; \mbf{Y}|\mbf{W}) \ge \underline{\mbf{H}}(\mbf{Y}|\mbf{W}) - \overline{\mbf{H}}(\mbf{Y}|\mbf{X},\mbf{W}), \nonumber
%\end{align}
respectively, for the upper and the lower bounds. For proving the upper-bound, we argue along the same lines as Lemmas~\ref{lem:HUbarYgivenXandWErasure} and \ref{lem:HUbarYgivenWErasure} to obtain
\begin{align}
 \ \ \overline{\mbf{H}}(\mbf{Y}|\mbf{W}) &\le 1, \mbox{ and} \nonumber \\
 \ \ \overline{\mbf{H}}(\mbf{Y}|\mbf{X},\mbf{W}) &= \EX_{\pi}\left[h(\phi(W))\right]. \label{eq:BSCHUbarEquality}
\end{align}
The last equality uses the following simple insight: the  channel probability distribution for binary symmetric queue-channel can be written as 
$$\PR(Y^n|X^n,W^n)=\PR(N^n|W^n),$$
where $Y_i = X_i + N_i~(\mbox{modulo } 2)$ and $N_i \sim \mbox{Bernoulli}(\phi(W_i))$. The rest follows using 
conditional independence of $\{N_i\}$ given $\{W_i\},$ and the ergodicity of the process $\{W_i\}$. This completes the proof of the upper-bound.

For deriving the lower bound, we fix the input distribution to be i.i.d. Bernoulli$(0.5)$ independent of $\mbf{W}$. An elementary calculation then shows that $\mbf{Y}$ is also i.i.d. Bernoulli$(0.5)$ and is  independent of $\mbf{W}$. Thus,
$$\underline{\mbf{H}}(\mbf{Y}|\mbf{W})=1.$$ 
Using \eqref{eq:liminf_I} and \eqref{eq:BSCHUbarEquality} we obtain the following lower bound:
$$\underline{\mathbf{I}} (\mbf{X}; \mbf{Y} | \mbf{W}) \ge 1 - h(\EX_{\pi}\left[\phi(W)\right]).$$

\noindent {\em Case II: arrival and departures times are not known at the receiver}

For obtaining the lower-bound in this case we use the following inequality from \cite{Han2003}.
$$\underline{\mathbf{I}} (\mbf{X}; \mbf{Y}) \ge \underline{\mbf{H}}(\mbf{Y}) - \overline{\mbf{H}}(\mbf{Y}|\mbf{X}).$$
Again, for  uniform i.i.d. $\mbf{X}$, we have
$\underline{\mbf{H}}(\mbf{Y})=1.$

For obtaining a lower bound, we obtain an upper-bound on
$\overline{\mbf{H}}(\mbf{Y}|\mbf{X})$ by upper-bounding the limit 
\begin{align}
& \ \  \lim_{n\to\infty}\frac{1}{n} \log\left(\frac{1}{\PR(N^n)}\right) \label{eq:NergodicLimit}
\end{align}
in an almost sure sense.

Since $\{W_i\}$ is stationary and ergodic, so is $\{N_i\}$. To bound the limit in \eqref{eq:NergodicLimit}, we use successive Markov approximations of this ergodic process, where the Markov approximation approaches the `true' process, as the order or the memory of the Markov process tends to infinity. (A celebrated application of this successive Markov approximation approach appears in proving the asymptotic optimality of the LZ78 universal source coding algorithm --- see \cite[Section~13.5.2]{CoverT2006}).

To be precise, we first consider the case where $N_i$ is a $k$-th order stationary and ergodic Markov process and obtain a bound on the above limit,  which does not depend on $k$. 
Let $\{\tilde{N}_t\}$ be a $k$-th order Markov process with a transition kernel 
\begin{align}
& \prod_{j=1}^n \PR(\tilde{N}_j=u_j|\tilde{N}_{j-1}=u_{j-1},  \ldots \tilde{N}_{(j-k)_+}=u_{(j-k)_+}) \nonumber \\
& =\prod_{j=1}^n \PR(N_j=u_j|N_{j-1}=u_{j-1}, \ldots N_{(j-k)_+}=u_{(j-k)_+}),
\nonumber 
\end{align}
%\begin{align}
%& \PR(\tilde{N}^n=u^n) =\nonumber\\ &\PR(\tilde{N}_1=u_1) \prod_{j=1}^n \PR(\tilde{N}_j=u_j|\tilde{N}_{j-1}=u_{j-1},  \ldots \tilde{N}_{(j-k)_+}=u_{(j-k)_+}) \nonumber \\
%& =\PR(N_1=u_1) \prod_{j=1}^n \PR(N_j=u_j|N_{j-1}=u_{j-1}, N_{j-2}=u_{j-2}, \ldots N_{\max(j-k,1)}=u_{\max(j-k,1)}),
%\nonumber 
%\end{align}
and $\PR(\tilde{N}_1=u_1)=\PR(N_1=u_1)$, where $(j-k)_+=\max(j-k,1)$. As claimed in the proof of \cite[Lemma~13.5.4]{CoverT2006}, \eqref{eq:NergodicLimit} is equal to 
$\lim_{k\to \infty} \beta_k$, where $\beta_k$ is the almost sure limit of 
$\frac{1}{n} \log\left(\frac{1}{\PR(\tilde{N}^n)}\right)$ for the $k$th order Markov process.

Note that
$\frac{1}{n} \log\left(\frac{1}{\PR(\tilde{N}^n)}\right)$ is given by
$$
\frac{1}{n} \log\left(\frac{1}{\PR(\tilde{N}^k)}\right) - \frac{1}{n} \sum_{t=k+1}^n \log \PR(\tilde{N}_t|\tilde{N}_{t-1},  \ldots, \tilde{N}_{t-k+1}).$$
By ergodicity, the above expression has the following almost sure limit:
$$\EX_{\tilde{N}_{k+1},\tilde{N}_{k}, \ldots, \tilde{N}_{1}}\left[- \log \PR(\tilde{N}_{k+1}|\tilde{N}_{1}, \tilde{N}_{2}, \cdots, \tilde{N}_{k})\right].$$
The expectation above is over the joint distribution of $(\tilde{N}_{1}, \tilde{N}_{2}, \cdots, \tilde{N}_{k}, \tilde{N}_{k+1}),$ and can be  recognized readily as the conditional entropy of $\tilde{N}_{k+1},$ given $\tilde{N}_{1}, \tilde{N}_{2}, \cdots, \tilde{N}_{k}$. This in turn implies that
\begin{align}
& \ \ \EX_{\tilde{N}_{k+1},\tilde{N}_{k}, \ldots, \tilde{N}_{1}}\left[- \log \PR(\tilde{N}_{k+1}|\tilde{N}_{1}, \tilde{N}_{2}, \cdots, \tilde{N}_{k})\right] \nonumber \\
& = \EX_{\tilde{N}_{k}, \ldots, \tilde{N}_{1}}\left[\EX_{\tilde{N}_{k+1}}\left[- \log \PR(\tilde{N}_{k+1}|\tilde{N}_{1}, \tilde{N}_{2}, \cdots, \tilde{N}_{k})\right]\right] \label{eq:iterExpe} \\
& = \EX_{\tilde{N}_{k}, \ldots, \tilde{N}_{1}}\left[h(\PR(\tilde{N}_{k+1}|\tilde{N}_{1}, \tilde{N}_{2}, \cdots, \tilde{N}_{k}))\right] \nonumber \\
& \le h(\EX_{\tilde{N}_{k}, \ldots, \tilde{N}_{1}}\left[\PR(\tilde{N}_{k+1}|\tilde{N}_{1}, \tilde{N}_{2}, \cdots, \tilde{N}_{k})\right]) \label{eq:concavity}\\
& = h(\PR(\tilde{N}_{k})) = h(\EX_{\pi}\left[\phi(W)\right]). \label{eq:stationarity} 
\end{align}
Here, \eqref{eq:iterExpe}, \eqref{eq:concavity} and \eqref{eq:stationarity} follow from iterated expectation, Jensen's inequality due to concavity of $h(\cdot)$, and stationarity of $\tilde{N}_{k}$ which is {Bernoulli}$(\EX_{\pi}\left[\phi(W)\right])$, respectively.
Thus, $h(\EX_{\pi}\left[\phi(W)\right])$ is an upper-bound on $\EX_{\tilde{N}_{k+1},\tilde{N}_{k}, \ldots, \tilde{N}_{1}}\left[- \log \PR(\tilde{N}_{k+1}|\tilde{N}_{1}, \tilde{N}_{2}, \cdots, \tilde{N}_{k})\right]$ for all $k$.
Hence, \eqref{eq:NergodicLimit} is upper-bounded by $h(\EX_{\pi}\left[\phi(W)\right])$.
\end{IEEEproof}
When the receiver does not know the arrival and departure times, $\lambda\left(1 - \EX_{\pi}\left[h(\phi(W))\right]\right)$ is still  an upper-bound on the capacity. Thus, there remains a `Jensen's gap' between the upper and the lower bound in that case.

We remark that the above capacity result for the binary symmetric queue-channel is a special case of  the capacity result for a class of classical queue-channels called random bijective queue-channels --- see \cite{ChatterjeeJM2018Long}. 
\section{Concluding Remarks and Future Work}
\label{sec:concl}
In this paper, we considered quantum queue-channels as a framework to study the sequential processing of qubits, and derived fundamental capacity limits for sequential quantum information processing. For the class of additive queue-channels, we derived a single-letter upper bound for the capacity in terms of the stationary waiting time in the queue. We then showed that this capacity upper bound is achievable for two important noise models, namely the erasure channel and the depolarising channel. We also showed that a classical coding/decoding strategy is capacity achieving for these channel.
\begin{figure*}[t!]
	\begin{equation}
	\underline{\mathbf{I}}( \, \{ \vec{P}, \vec{\rho} \, \}, \vec{\cN}_{\vec{W}} \, ) = 
	\sup \left\lbrace a\in\mathbb R^+\left\vert \lim_{n\rightarrow \infty} \sum_{X^{n}\in \cX^{(n)}} P^{n}(X^{n}) \tr \left[ \cN^{(n)}_{W^{(n)}}(\rho_{X^{n}}) \left\lbrace \Gamma_{\{P^{n}(X^{n}), \rho_{X^{n}}\}}(a)   > 0 \right.\right\rbrace \right] = 1 \right\rbrace .\label{eq:quantum_I}
	\end{equation}
	\hrulefill
\end{figure*}
\begin{figure*}[t!]
	%\hrulefill
	\begin{align}
	& \ \ \ - \log \PR(Y^n|W^n) = - \log \PR(\{Y_i: i \in J^{\mc{E}}\}, \{Y_i: i \not\in J^{\mc{E}}\} |W^n) \nonumber \\
	& =  - \log \PR(\{Y_i: i \in J^{\mc{E}}\}|\{W_i: i \in J^{\mc{E}}\})  - \log \PR(\{Y_i: i \not\in J^{\mc{E}}\}|\{W_i: i \not\in J^{\mc{E}}\})
	\label{eq:erasure2}\\
	& = - \sum_{i \in J^{\mc{E}}} \log \PR(\mc{E}|W_i) - \log \PR(\{Y_i: i \not\in J^{\mc{E}}\}|\{W_i: i \not\in J^{\mc{E}}\}) \label{eq:erasure3} \\
	& = \left. - \sum_{i \in J^{\mc{E}}} \log \PR(\mc{E}|W_i) -  \log \PR(\{Y_i\neq \mc{E}:i \not\in J^{\mc{E}}\},\{Y_i: i \not\in J^{\mc{E}}\}|\{W_i: i \not\in J^{\mc{E}}\})\right.  \label{eq:erasure4} \\
	& = \left. - \sum_{i \in J^{\mc{E}}} \log p(W_i)  - \log \PR(\{Y_i\neq \mc{E}:i \not\in J^{\mc{E}}\}|\{W_i: i \not\in J^{\mc{E}}\}) \right.  \left. - \ \log \PR(\{Y_i: i \not\in J^{\mc{E}}\}|\{Y_i\neq \mc{E}:i \not\in J^{\mc{E}}\},\{W_i: i \not\in J^{\mc{E}}\})\right. \nonumber \\
	& = \left. - \sum_{i \in J^{\mc{E}}} \log p(W_i) - \sum_{i \not\in J^{\mc{E}}} 
	\log (1-p(W_i)) \right.  \left.    -  \ \log \PR(\{Y_i: i \not\in J^{\mc{E}}\}|\{Y_i\neq \mc{E}:i \not\in J^{\mc{E}}\},\{W_i: i \not\in J^{\mc{E}}\})\right. \nonumber \\
	&= - \sum_{i=1}^n \left[\mbf{1}(Y_i=\mc{E}) \log(p(W_i)) + \mbf{1}(Y_i\neq \mc{E}) \log(1-p(W_i))\right]    -  \log \PR(\{Y_i: i \not\in J^{\mc{E}}\}|\{Y_i\neq \mc{E}, W_i: i \not\in J^{\mc{E}}\})\mbox{.} \label{eq:erasure5}
	\end{align}
	\hrulefill
\end{figure*}

There is ample scope for further work along several directions. First, we can study the queue-channel capacity for other noise channels whose Holevo information is known to be additive. This includes the class of entanglement breaking channels~\cite{shor_EB} as well as the class of Hadamard channels~\cite{wildeBook}. The upper bound on the classical capacity of the queue-channel proved here will indeed hold in all of these cases. However, whether the upper bound is achievable or not will depend on the structure of the induced classical channel. As seen in the case of the depolarizing queue-channel, the upper bound may or may not be tight depending on whether the receiver's knowledge of the waiting times of the qubits. Whether our capacity results can be extended to the larger class of additive queue-channels is an interesting avenue for further investigations. In this context, the upper bounds proved in~\cite{wang2019converse} and the corresponding achievability results for finite-length codes might be useful.
An immediate application of our work would be in identifying optimal rates for reliable communication across quantum networks. We can also quantitatively evaluate the impact of using quantum codes with finite block lengths to protect qubits from errors. Employing a code would enhance robustness to errors, but would also increase the waiting time due to the increased number of qubits to be processed. It would be interesting to characterise this tradeoff, and identify the regimes where using coded qubits would be beneficial or otherwise.

More generally, the qubits-through-queues paradigm studied here may be used to analyse various performance aspects of quantum circuits, such as, quantifying the best achievable processing rate for asymptotically perfect accuracy. As we enter an era of quantum networks and noisy intermediate-scale quantum technologies~\cite{preskill2018}, our work begins to quantitatively address the impact of decoherence on the performance limits of quantum information
processing systems.

\appendix

\subsection{Quantum Inf-Information Rate}\label{sec:qIbar}
\textcolor{black}{Consider the quantum queue-channel $\vec{\cN}_{\vec{W}}$ comprising a sequence of channels $\{\cN^{(n)}_{W^{(n)}}\}_{n=1}^{\infty}$, which are parameterised by the corresponding waiting time sequences $\{W^{(n)}\}_{n=1}^{\infty}$. Let $\vec{P}$  denote the totality of sequences $\{P^{n}(X^{n})\}_{n=1}^{\infty}$ of probability distributions (with finite support) over input sequences $X^{n}$, and $\{\vec{\rho}\}$ denote the sequences of states $\{\rho_{X^{n}}\}$ corresponding to the encoding $X^{n}\rightarrow \rho_{X^{n}}$. For any $a \in \mathbb{R}^{+}$ and $n$, we first define the operator
	\begin{align*}\Gamma_{\{P^{n}(X^{n}), \rho_{X^{n}}\}}(a) &=\\ \cN^{(n)}_{W^{(n)}}&(\rho_{X^{n}})- e^{an}\sum_{X^{n}\in \cX^{(n)}}P^{n}(X^{n})\cN^{(n)}_{W^{(n)}}(\rho_{X^{n}}) .\end{align*}
	Let $\{  \Gamma_{\{P^{n}(X^{n}), \rho_{X^{n}}\}}(a)  > 0\}$ denote the projector onto the positive eigenspace of the operator $\Gamma_{\{P^{n}(X^{n}), \rho_{X^{n}}\}}(a) $. 
	\begin{definition}
		The quantum inf-information rate~\cite{HN03_genCapacity} $\underline{\mathbf{I}}( \, \{ \vec{P}, \vec{\rho} \, \}, \vec{\cN}_{\vec{W}} \, )$  is defined as in \eqref{eq:quantum_I}.
	\end{definition}
	This is the quantum analogue of the classical inf-information rate originally defined in~\cite{VerduH1994, Han2003}. The central result of~\cite{HN03_genCapacity} is to show that the classical capacity of the channel sequence $\vec{\cN}_{\vec{W}}$ is given by
	\[C = \sup_{\{\vec{P}, \vec{\rho}\}}\underline{\mathbf{I}} (\{\vec{P},\vec{\rho}\},\vec{\cN}_{\vec{W}}). \]
}

\subsection{Proofs of Lemmas~\ref{lem:HUbarYgivenXandWErasure}, \ref{lem:HUbarYgivenWErasure} and~\ref{lem:HUbarYgivenWErasureAchieve}}\label{sec:proofs}

\begin{IEEEproof}[Proof of Lemma~\ref{lem:HUbarYgivenXandWErasure}]
	We are interested in computing $\overline{\mbf{H}}(\mbf{Y}|\mbf{X},\mbf{W}),$ which is the limsup in probability of the sequence $\left\{\frac{1}{n} \log \frac{1}{\PR(Y^n|X^n,W^n)}\right\}_{n=1}^\infty.$ We start with the product form \eqref{eq:productform} for the channel, and also note that $\PR(Y_i|X_i, W_i)=p(W_i)$ if $Y_i$ is an erasure, else, it is $1-p(W_i)$. Combining these observations we obtain
	
	{\small \begin{align}
		& \ \ \log \frac{1}{\PR(Y^n|X^n,W^n)} = \sum_{i=1}^n \log\frac{1}{\PR(Y_i|X_i,W_i)} \nonumber \\
		& = \left[\sum_{i \in J^{\mc{E}}}\log\frac{1}{\PR(Y_i|X_i,W_i)} + \sum_{i \in [n]\setminus J^{\mc{E}}}\log\frac{1}{\PR(Y_i|X_i,W_i)}\right]\nonumber\\
		& = - \sum_{i=1}^n \left[\mbf{1}(Y_i=\mc{E}) \log(p(W_i)) + \mbf{1}(Y_i\neq \mc{E}) \log(1-p(W_i))\right]\mbox{,}\nonumber %\label{eq:erasure1}
		\end{align}}
	where $\mc{E}$ represents erasure, and  $J^{\mc{E}}$ denotes the set of indices for which $Y_i=\mc{E}$. By ergodicity of the queue, the limit of the final expression divided by $n$ exists almost surely as a finite constant, and is equal to $\EX_\pi[h(p(W))].$ 
	
	Finally, we note that the lim-sup in probability and the lim-inf in probability of a sequence are both equal to the almost sure limit of that sequence, whenever the latter exists.
	Hence, the above almost sure limit is  the value of $\overline{\mbf{H}}(\mbf{Y}|\mbf{X},\mbf{W})$. Note that for an erasure queue-channel this limit does not depend on the distribution of $X^n.$
\end{IEEEproof}

\begin{IEEEproof}[Proof of Lemma~\ref{lem:HUbarYgivenWErasure}] We are interested in computing $\overline{\mbf{H}}(\mbf{Y}|\mbf{W}),$ which is the limsup in probability of the sequence $\left\{\frac{1}{n} \log \frac{1}{\PR(Y^n|W^n)}\right\}_{n=1}^\infty.$ 
	Following standard conditional probability arguments, we get the series of equalities 
	\eqref{eq:erasure2}--\eqref{eq:erasure5}.
	
	The equalities in \eqref{eq:erasure2} and \eqref{eq:erasure3} follow from the fact that given $W_i$, the probability of a symbol getting erased is independent of anything else (even the input symbols). The equality in \eqref{eq:erasure4} follows
	because the event  $\{Y_i\neq \mc{E}:i \not\in J^{\mc{E}}\}$ contains the event $\{Y_i: i \not\in J^{\mc{E}}\}$.

	Note that we need to analyze only the second term in \eqref{eq:erasure5}. Since the first term divided by $n$, as argued in the proof of Lemma~\ref{lem:HUbarYgivenXandWErasure}, has an almost sure limit $\EX_\pi[h(p(W))].$
	
	Note that for an erasure channel, if $Y_i$ is not an erasure, $Y_i$ has the same value as that of 
	$X_i$ and $\{X_i\}$ are chosen independently of $\{W_i\}$. So, for 
	any joint distribution $\PR_{\mbf{X}}$ of input symbols:
	\begin{align}
	& -\frac{1}{n} \log \PR(\{Y_i: i \not\in J^{\mc{E}}\}|\{Y_i\neq \mc{E}, W_i: i \not\in J^{\mc{E}}\})  \nonumber \\
	& = - \frac{1}{n} \log \PR_{\mbf{X}}(\{Y_i: i \not\in J^{\mc{E}}\}) \nonumber \\
	& = - \frac{n-|J^{\mc{E}}|}{n} \frac{1}{n-|J^{\mc{E}}|} \log \PR_{\mbf{X}}(\{Y_i: i \not\in J^{\mc{E}}\}).\nonumber
	\end{align}
	
	By ergodicity  $\frac{|J^{\mc{E}}|}{n}$ converges a.s. to $\EX_\pi[p(W)] < 1$. So,  a.s. $n-|J^{\mc{E}}| \to \infty$. Thus, a.s. 
	$\lim_{n\to\infty} -\frac{1}{n-|J^{\mc{E}}|} \log \PR_{\mbf{X}}(\{Y_i: i \not\in J^{\mc{E}}\})$  is upper-bounded by the entropy rate of an i.i.d. Bernoulli$(0.5)$ process $\mbf{X}$, i.e., 
	$1$. Thus, a.s. the second term in \eqref{eq:erasure5} is  at most $1-\EX_\pi[p(W)]$.
\end{IEEEproof}

\begin{IEEEproof}[Proof of Lemma~\ref{lem:HUbarYgivenWErasureAchieve}]
	This proof is similar to the proof of Lemma~\ref{lem:HUbarYgivenWErasure}, except the fact that
	$$\lim_{n\to\infty} -\frac{1}{n-|J^{\mc{E}}|} \log \PR_{\mbf{X}}(\{Y_i: i \not\in J^{\mc{E}}\})=1$$ when
	$\PR_{\mbf{X}}$ is  i.i.d. uniform.
\end{IEEEproof}

\bibliographystyle{IEEEtran}
\newcommand{\SortNoop}[1]{}

\vspace{-1cm}
\begin{IEEEbiographynophoto}{Prabha Mandayam} is an Assistant Professor at the Department of Physics,  IIT Madras. Her research interests lie in the areas of quantum error correction, quantum information and quantum cryptography. She obtained her Masters in Physics from IIT Madras and Ph.D. in Physics from the California Institute of Technology.
\end{IEEEbiographynophoto}\vspace{-1cm}
\begin{IEEEbiographynophoto}{Krishna Jagannathan} is an Associate Professor at the Department of Electrical Engineering, IIT Madras. He obtained his B. Tech. in Electrical Engineering from IIT Madras in 2004, and the S.M. and Ph.D. degrees in Electrical Engineering and Computer Science from the Massachusetts Institute of Technology in 2006 and 2010 respectively. His research interests lie in the areas of stochastic modeling and analysis of communication networks, online learning, network control, and queuing theory.
\end{IEEEbiographynophoto}\vspace{-1cm}
\begin{IEEEbiographynophoto}{Avhishek Chatterjee} is an Assistant Professor at the Department of Electrical Engineering, IIT Madras. He obtained his Ph.D. in Electrical and Computer Engineering from The University of Texas at Austin. His research interest is in stochastic networks and information dynamics.
	\end{IEEEbiographynophoto}

\end{document}